\newcommand{\PR}{\mathbb P}
\newcommand{\E}{\mathbb E}
\newcommand{\C}{\mathbb C}
\newcommand{\mytilde}{\raise.17ex\hbox{$\scriptstyle\mathtt{\sim}$}}
\newtheorem{theorem}{Theorem}[section]
\newtheorem{corollary}[theorem]{Corollary}
\newtheorem{definition}[theorem]{Definition}
\newtheorem{proposition}[theorem]{Proposition}
\newtheorem*{assumption}{Assumptions}
\newtheorem*{remark}{Remark}
\begin{document}

\title{Critical phenomena in exponential random graphs}

\author{Mei Yin}
\address{Department of Mathematics, University of Texas, Austin,
TX, 78712, USA} \email{myin@math.utexas.edu}

\dedicatory{\rm September 19, 2013}

\begin{abstract}
The exponential family of random graphs is one of the most
promising class of network models. Dependence between the random
edges is defined through certain finite subgraphs, analogous to
the use of potential energy to provide dependence between particle
states in a grand canonical ensemble of statistical physics. By
adjusting the specific values of these subgraph densities, one can
analyze the influence of various local features on the global
structure of the network. Loosely put, a phase transition occurs
when a singularity arises in the limiting free energy density, as
it is the generating function for the limiting expectations of all
thermodynamic observables. We derive the full phase diagram for a
large family of 3-parameter exponential random graph models with
attraction and show that they all consist of a first order surface
phase transition bordered by a second order critical curve.
\end{abstract}

\maketitle

\section{Introduction}
The exponential family of random graphs is one of the most widely
studied network models. Their popularity lies in the fact that
they capture a wide variety of common network tendencies, such as
connectivity and reciprocity, by representing a complex global
structure through a set of tractable local features. The
theoretical foundations for these models were originally laid by
Besag \cite{Besag}, who applied methods of statistical analysis
and demonstrated the powerful Markov-Gibbs equivalence
(Hammersley-Clifford theorem \cite{HC}) in the context of spatial
data. Building on Besag's work, further investigations quickly
followed. Holland and Leinhardt \cite{HL} derived the exponential
family of distributions for networks in the directed case. Frank
and Strauss \cite{FS} showed that the random graph edges form a
Markov random field when the local network features are given by
counts of various triangles and stars. Newer developments are
summarized in Snijders et al. \cite{S} and Rinaldo et al.
\cite{R}. (See Wasserman and Faust \cite{WF} for a comprehensive
review of the methods and models for analyzing network
properties.)

As usual in statistical physics, we start with a finite
probability space, namely the set $\mathcal{G}_n$ of all simple
graphs on $n$ vertices (``simple'' means undirected, with no loops
or multiple edges). The general $k$-parameter family of
exponential random graphs is defined by assigning a probability
mass function $\PR_n^{\beta}(G_n)$ to every simple graph $G_n \in
\mathcal{G}_n$:
\begin{equation}
\label{pmf} \PR_n^{\beta}(G_n)=\exp\left(n^2(\beta_1
t(H_1,G_n)+\cdots+
  \beta_k t(H_k,G_n)-\psi_n^{\beta})\right),
\end{equation}
where $\beta=(\beta_1,...,\beta_k)$ are $k$ real parameters,
$H_1,...,H_k$ are pre-chosen finite simple graphs (in particular,
we take $H_1$ to be a single edge), $t(H_i, G_n)$ is the density
of graph homomorphisms (the probability that a random vertex map
$V(H_i) \to V(G_n)$ is edge-preserving),
\begin{equation}
t(H_i, G_n)=\frac{|\text{hom}(H_i, G_n)|}{|V(G_n)|^{|V(H_i)|}},
\end{equation}
and $\psi_n^{\beta}$ is the normalization constant (free energy
density),
\begin{equation}
\label{psi} \sum_{G_n \in \mathcal{G}_n} \exp\left(n^2
\left(\beta_1 t(H_1,G_n)+\cdots+\beta_k t(H_k,G_n)\right)
\right)=\exp\left(n^2 \psi^\beta_n\right).
\end{equation}

These exponential random graphs are particularly useful when one
wants to simulate observed networks as closely as possible, but
without going into details of the specific process underlying
network formation. Since real-world networks are often very large
in size, ranging from hundreds to billions of vertices, our main
interest will be in the behavior of the exponential random graph
$G_n$ in the large $n$ limit. Intuitively, the $k$ parameters
$\beta_1,...,\beta_k$ allow one to adjust the influence of
different local features (in this case, densities of different
subgraphs $H_1,...,H_k$) on the limiting probability distribution,
and a natural question to ask is how would the tuning of
parameters impact the global structure of a typical random graph
$G_n$ drawn from this model? Even in the dense graph regime where
the number of edges in the graph scales like $O(n^2)$, this
question is already interesting, and so this paper focuses on
large dense random graphs with non-negative parameters $\beta_i$.
Realistic networks are often fairly sparse. Nevertheless, if the
parameters $\beta_i$ in the model are sufficiently large negative
(i.e., high concentrations of certain local features are
discouraged), then typical realizations of the exponential model
would exhibit sparse behavior, and limiting graph structures in
this region will be addressed in a forthcoming paper.

Loosely put, a phase transition occurs when the limiting free
energy density $\displaystyle \psi^\beta_\infty=\lim_{n\to
  \infty}\psi_n^{\beta}$ has a singular point. The
reason behind this is that the limiting free energy density is the
generating function for the limiting expectations of all
thermodynamic observables,
\begin{equation}
\label{E} \lim_{n\to \infty}\E^\beta t(H_i, G_n)=\lim_{n\to
\infty}\frac{\partial}{\partial
\beta_i}\psi_n^\beta=\frac{\partial}{\partial
\beta_i}\psi_\infty^\beta,
\end{equation}
\begin{equation}
\label{Cov} \lim_{n\to \infty}n^2\left(\C\textrm{ov}^\beta
\left(t(H_i, G_n), t(H_j, G_n)\right)\right)=\lim_{n\to
\infty}\frac{\partial^2}{\partial \beta_i
\partial \beta_j}\psi_n^\beta=\frac{\partial^2}{\partial \beta_i
\partial \beta_j}\psi_\infty^\beta.
\end{equation}
Notice that the exchange of limits in (\ref{E}) and (\ref{Cov}) is
nontrivial, since it involves summation over an infinite number of
terms. Building on earlier work of Chatterjee and Diaconis
\cite{CD}, we will show in Theorem \ref{One} that $\displaystyle
\psi_{\infty}^{\beta}$ exists and explore its analyticity
properties. The proof of Theorem $2$ by Yang and Lee \cite{YL} on
the commutation of limits then goes through without much
difficulty in this setting, as the free energy density under
consideration here may also be expressed as (locally) uniformly
convergent power series. This implies that a singularity in the
limiting thermodynamic function must arise from a singularity in
the limiting free energy density, and we can define phases and
phase transitions through the limiting free energy density as
follows.

\begin{definition}
A phase is a connected region of the parameter space $\{\beta\}$,
maximal for the condition that the limiting free energy density
$\displaystyle \psi_{\infty}^{\beta}$ is analytic. There is a
$j$th-order transition at a boundary point of a phase if at least
one $j$th-order partial derivative of $\psi_{\infty}^{\beta}$ is
discontinuous there, while all lower order derivatives are
continuous.
\end{definition}

For $k=1$, it has been well established that the exponential model
reduces to the famous Erd\H{o}s-R\'{e}nyi random graph $G(n,
\rho)$ \cite{ER}, which has on average $\binom{n}{2} \rho$ edges,
and its structure is completely specified by the edge formation
probability $\displaystyle \rho=e^{2\beta_1}/(1+e^{2\beta_1})$.
Fix a finite $n$. As $\rho$ increases, the model evolves from a
low-density state in which all components are small to a
high-density state in which an extensive fraction of all vertices
are joined together in a single giant component. In the large $n$
limit, the transition occurs when $\rho$ is close to $0$ or
equivalently when $\beta_1$ is close to $-\infty$. This phenomenon
coincides with our above definition, as in one dimension, the
limiting free energy density $\displaystyle \psi_{\infty}^{\beta}$
of the random graphs is analytic.

For $k=2$, the situation is understandably more complicated and
has attracted enormous attention in recent years: Park and Newman
\cite{PN1} \cite{PN2} developed mean-field approximations and
analyzed the phase diagram for the edge-$2$-star and edge-triangle
models. Chatterjee and Diaconis \cite{CD} gave the first rigorous
proof of singular behavior in the edge-triangle model with the
help of the emerging tools of graph limits as developed by
Lov\'{a}sz and coworkers \cite{LS}. There are also related results
in H\"{a}ggstr\"{o}m and Jonasson \cite{HJ} and Bhamidi et al.
\cite{B}. Radin and Yin \cite{RY} derived the full phase diagram
for $2$-parameter exponential random graph models with attraction
($\beta_2\geq 0$) and showed that they all contain a first order
transition curve ending in a second order critical point. Aristoff
and Radin \cite{AR} treated $2$-parameter random graph models with
repulsion ($\beta_2\leq 0$) and proved that the region of
parameter space corresponding to multipartite structure is
separated by a phase transition from the region of disordered
graphs (their proof was recently improved by Yin \cite{Yin}).

One of the key motivations for considering exponential random
graphs is to develop models that exhibit transitivity and clumping
(i.e., a friend of a friend is likely also a friend). However, as
seen in experiments and through heuristics \cite{PN2}, it is often
futile to model transitivity with only $2$ subgraphs $H_1$ and
$H_2$ (say edge and triangle) as sufficient statistics. If
$\beta_2$ is positive, the graph is essentially behaving like an
Erd\H{o}s-R\'{e}nyi graph, while if $\beta_2$ is negative, it
becomes roughly bipartite \cite{CD}. The near-degeneracy observed
in experiments and proved in \cite{CD} \cite{RY} for large values
of $\beta_2$ also renders the $2$-parameter model quite useless.
To accurately model the global structural properties of real-world
networks, more local features of the random graph $G_n$ need to be
captured. We therefore incorporate the density of one more
subgraph $H_3$ into the probability distribution and study the
phase structure of the exponential model in the $k=3$ setting. Our
main results are the following.

\begin{assumption}
Consider a $3$-parameter exponential random graph model where the
probability mass function $\PR_n^{\beta}(G_n)$ for $G_n \in
\mathcal{G}_n$ is given by
\begin{equation}
\label{assumption} \PR_n^{\beta}(G_n)=\exp\left(n^2(\beta_1
t(H_1,G_n)+\beta_2 t(H_2,G_n)+\beta_3
t(H_3,G_n)-\psi_n^{\beta})\right).
\end{equation}
Assume that $H_1$ is a single edge, $H_2$ has $p$ edges, and $H_3$
has $q$ edges, with $2 \leq p \leq q \leq 5p-1$.
\end{assumption}

\begin{theorem}
\label{One} Consider a $3$-parameter exponential random graph
model (\ref{assumption}). The limiting free energy density
$\displaystyle \psi_\infty^\beta$ exists at all $\{(\beta_1,
\beta_2, \beta_3): \beta_2\geq 0, \beta_3\geq 0\}$, and is
analytic except on a certain continuous surface $S$ which includes
three bounding curves $C_1$, $C_2$, and $C_3$: The surface $S$
approaches the plane $\beta_1+\beta_2+\beta_3=0$ as $\beta_1 \to
-\infty$, $\beta_2 \to \infty$, and $\beta_3 \to \infty$; The
curve $C_1$ is the intersection of $S$ with the $(\beta_1,
\beta_2)$ plane $\{(\beta_1, \beta_2, \beta_3): \beta_3=0\}$; The
curve $C_2$ is the intersection of $S$ with the $(\beta_1,
\beta_3)$ plane $\{(\beta_1, \beta_2, \beta_3): \beta_2=0\}$; The
curve $C_3$ is a critical curve, and is given parametrically by
\begin{eqnarray}
    \beta_1(u)&=&\frac{1}{2}\log
\frac{u}{1-u}-\frac{1}{2(p-1)(1-u)}+\frac{pu-(p-1)}{2(p-1)(q-1)(1-u)^2}, \notag\\
    \beta_2(u)&=&\frac{qu-(q-1)}{2p(p-1)(p-q)u^{p-1}(1-u)^2},  \notag\\
    \beta_3(u)&=&\frac{pu-(p-1)}{2q(q-1)(q-p)u^{q-1}(1-u)^2},
\end{eqnarray}
where we take $\displaystyle \frac{p-1}{p}\leq u\leq
\frac{q-1}{q}$ to meet the non-negativity constraints on $\beta_2$
and $\beta_3$ (see Figure \ref{critical}). All the first
derivatives $\displaystyle \frac{\partial}{\partial
\beta_1}\psi_\infty^\beta$, $\displaystyle
\frac{\partial}{\partial \beta_2}\psi_\infty^\beta$, and
$\displaystyle \frac{\partial}{\partial \beta_3}\psi_\infty^\beta$
have (jump) discontinuities across the surface $S$, except along
the curve $C_3$ where, however, all the second derivatives
$\displaystyle \frac{\partial^2}{\partial
\beta_1^2}\psi_\infty^\beta$, $\displaystyle
\frac{\partial^2}{\partial \beta_2^2}\psi_\infty^\beta$,
$\displaystyle \frac{\partial^2}{\partial
\beta_3^2}\psi_\infty^\beta$, $\displaystyle
\frac{\partial^2}{\partial \beta_1 \partial
\beta_2}\psi_\infty^\beta$, $\displaystyle
\frac{\partial^2}{\partial \beta_1 \partial
\beta_3}\psi_\infty^\beta$, and $\displaystyle
\frac{\partial^2}{\partial \beta_2 \partial
\beta_3}\psi_\infty^\beta$ diverge.
\end{theorem}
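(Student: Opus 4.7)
The plan is to reduce the full graphon variational problem to a scalar optimization, then analyze the resulting one-parameter family via elementary calculus. Invoking the Chatterjee--Diaconis large deviations framework (as in the work cited in \cite{CD} and \cite{RY}), one first writes
\begin{equation*}
\psi_\infty^\beta \;=\; \sup_{h} \Bigl[\beta_1 t(H_1, h) + \beta_2 t(H_2, h) + \beta_3 t(H_3, h) - I(h)\Bigr],
\end{equation*}
the supremum running over symmetric measurable $h:[0,1]^2 \to [0,1]$, with $I(h)$ the graphon entropy. Since $\beta_2, \beta_3 \geq 0$ and $H_1$ is the edge, Kruskal--Katona-type bounds $t(H_i,h) \leq t(H_1,h)^{e(H_i)}$ combined with the entropy's strict concavity force the supremum to be attained on the constant graphons $h \equiv u$. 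This collapses the problem to maximizing
\begin{equation*}
\ell(u;\beta) \;=\; \beta_1 u + \beta_2 u^{p} + \beta_3 u^{q} - I(u), \qquad u \in [0,1],
\end{equation*}
where $I(u) = \tfrac{1}{2}[u\log u + (1-u)\log(1-u)]$. Existence of $\psi_\infty^\beta$ is then immediate, and everywhere the maximizer $u^*(\beta)$ is unique and non-degenerate (that is, $\ell''(u^*) < 0$), the implicit function theorem applied to $\partial_u \ell = 0$ yields analyticity of $u^*$, hence of $\psi_\infty^\beta = \ell(u^*;\beta)$, in $\beta$.

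Next I would identify the singular locus. The function $\ell$ generically exhibits at most two competing local maxima; the surface $S$ is the locus where the two maxima have equal $\ell$-value, forcing $u^*$ to jump and producing a first-order discontinuity in the gradient $\partial_{\beta_i}\psi_\infty^\beta = (u^*)^{e(H_i)}$. The bounding curves $C_1, C_2$ are the traces of $S$ in the coordinate planes $\beta_3 = 0$ and $\beta_2 = 0$, which reduce to the two-parameter scenarios already analyzed in \cite{RY}. The critical curve $C_3$ is the locus where the two maxima coalesce; this forces the intervening local minimum to merge as well, so $\ell'$ acquires a triple root and we impose $\ell'(u) = \ell''(u) = \ell'''(u) = 0$ simultaneously. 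The last two equations are linear in $(\beta_2, \beta_3)$, and after computing $I''(u) = 1/(2u(1-u))$ and $I'''(u) = (2u-1)/(2u^2(1-u)^2)$, standard linear algebra inverts them to produce precisely the stated formulas for $\beta_2(u), \beta_3(u)$; substituting back into $\ell'(u) = 0$ gives $\beta_1(u)$. The non-negativity constraints $\beta_2, \beta_3 \geq 0$ translate directly into $(p-1)/p \leq u \leq (q-1)/q$. For the asymptotic behavior of $S$, I would balance the two candidate maxima: in the regimes $\beta_1 \to -\infty$, $\beta_2 \to \infty$, or $\beta_3 \to \infty$, the high-density maximum concentrates near $u = 1$ with value $\approx \beta_1 + \beta_2 + \beta_3$, while the low-density maximum sits near $u = 0$ with value $\approx 0$, so equality of heights drives $S$ toward the plane $\beta_1 + \beta_2 + \beta_3 = 0$.

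Finally, the derivative behavior follows by implicit differentiation. Off $C_3$ but on $S$, the jump in $u^*$ produces jump discontinuities in all three first partials. On $C_3$ the two maxima have coalesced, so the first partials are continuous there; but second partials, obtained by differentiating $\partial_u \ell(u^*;\beta) = 0$, carry a factor of $1/\ell''(u^*)$ which blows up because $\ell''(u^*) = 0$ along $C_3$, giving the claimed divergence of all six second derivatives. The principal obstacle I anticipate is justifying the reduction to constant graphons: with three subgraphs the Kruskal--Katona-type bound must be strict enough to rule out non-constant maximizers across the entire non-negative octant, and this is where the structural hypothesis $2 \leq p \leq q \leq 5p-1$ enters, both guaranteeing that $\ell$ has at most two local maxima (so that $C_3$ is a single connected arc rather than a more complicated singular set) and ensuring that the algebraic sign conditions underlying the parametrization remain consistent throughout the stated $u$-interval.
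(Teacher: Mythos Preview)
Your approach is essentially the same as the paper's: reduce to the scalar problem via Chatterjee--Diaconis, locate $S$ as the equal-height locus of two competing local maxima, parametrize $C_3$ by the coalescence conditions $\ell'=\ell''=\ell'''=0$, read off the asymptote $\beta_1+\beta_2+\beta_3=0$ by comparing the values near $u=0$ and $u=1$, and obtain the derivative statements by implicit differentiation together with the factor $1/\ell''(u^*)$.

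One misattribution is worth correcting. The reduction to constant graphons is Theorem~4.1 of \cite{CD} (restated in the paper as Theorem~\ref{CD}) and holds for \emph{any} $\beta_2,\beta_3\ge 0$, irrespective of $p$ and $q$; it is not where the hypothesis $2\le p\le q\le 5p-1$ enters, and no further Kruskal--Katona work is required. The hypothesis is used solely in the scalar calculus: it forces the auxiliary function
\[
f(u)=\frac{(p-1)-pu}{u^{\,q-1}(1-u)^2}
\]
to be monotone on $(0,1)$ (one checks that the discriminant of the relevant quadratic is nonpositive exactly when $p\le q\le 5p-1$), which in turn guarantees that $\ell''$ changes sign at most twice and hence that $\ell$ has at most two local maxima. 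So the ``principal obstacle'' you flag is already dispatched by the cited result; the genuine technical step is the monotonicity of $f$, which is precisely the ``at most two local maxima'' point you list only secondarily.
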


\begin{figure}
\centering
\includegraphics[clip=true,height=3.5in]{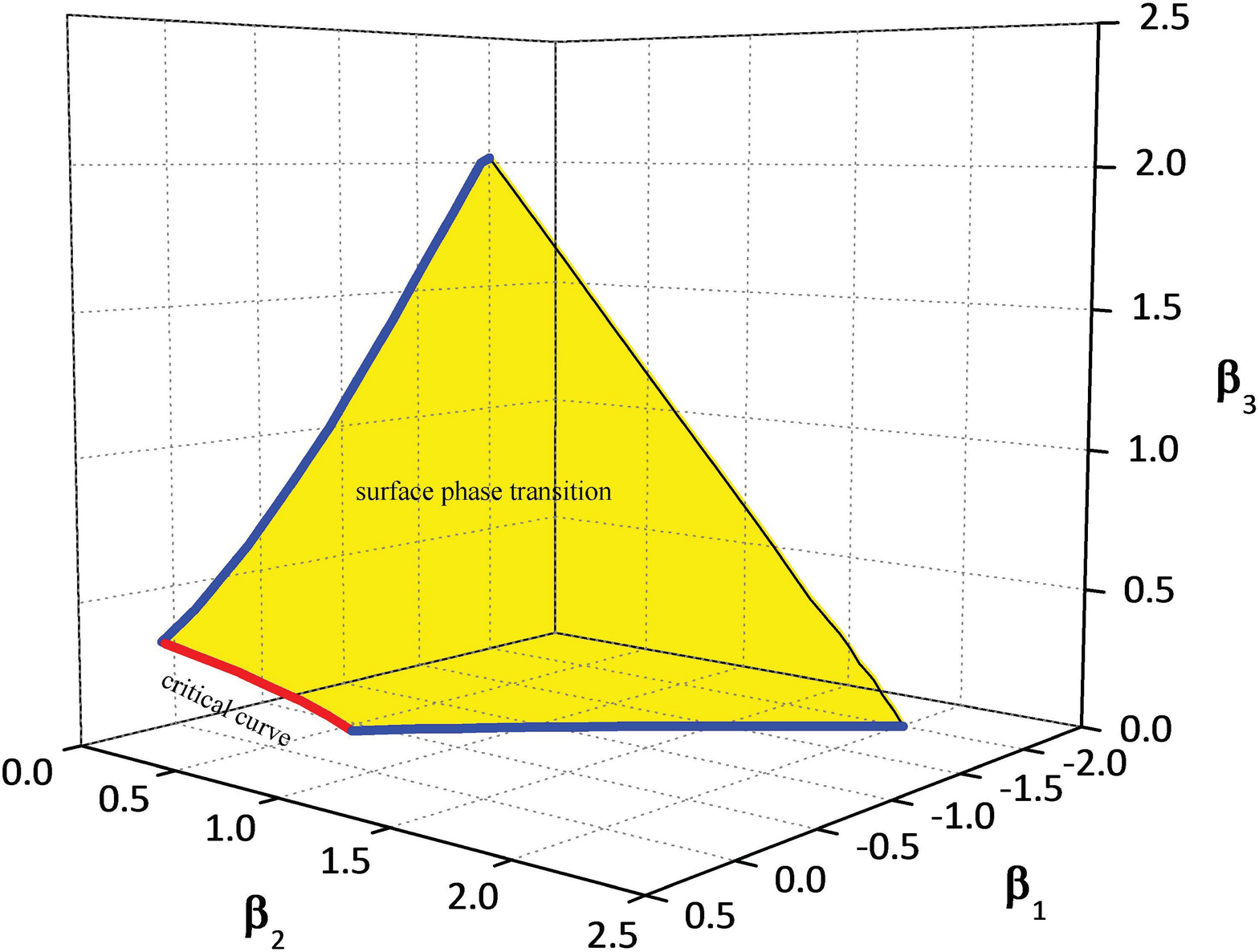}
\caption{Phase transition surface $S$ and critical curve $C_3$.
Graph drawn for $p=3$ and $q=5$.} \label{critical}
\end{figure}

By (\ref{E}) and (\ref{Cov}), the analyticity (or lack thereof) of
the limiting free energy density $\psi_\infty^\beta$ encodes
important information about the local features of the random graph
$G_n$ for large $n$: A (jump) discontinuity in the first
derivatives of $\psi_\infty^\beta$ across the surface $S$
indicates a discontinuity in the expected local densities, while
the divergence of the second derivatives of $\psi_\infty^\beta$
along the curve $C_3$ implies that the covariances of the local
densities go to zero more slowly than $1/n^2$.

\begin{corollary}
\label{Cor} The parameter space $\{(\beta_1, \beta_2, \beta_3):
\beta_2\geq 0, \beta_3\geq 0\}$ consists of a single phase with a
first order phase transition across the surface $S$ and a second
order phase transition along the critical curve $C_3$.
\end{corollary}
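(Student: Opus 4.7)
The plan is to deduce the corollary directly from Theorem \ref{One} by checking the two requirements in the preceding definition of a phase and a $j$th order transition. Theorem \ref{One} tells us that $\psi_\infty^\beta$ exists on the whole octant $\Omega:=\{(\beta_1,\beta_2,\beta_3):\beta_2\ge 0,\,\beta_3\ge 0\}$ and is analytic precisely on the open set $\Omega\setminus S$. Thus the content of the corollary reduces to two items: first, that $\Omega\setminus S$ is connected, so that it constitutes a \emph{single} phase rather than several; and second, that the orders of the transitions across $S\setminus C_3$ and along $C_3$ come out to one and two respectively.

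The main obstacle is the connectedness step. The claim here is that $S$, although topologically a surface, is a surface \emph{with boundary} whose boundary inside the closed octant lies entirely on the coordinate faces $\beta_2=0$ and $\beta_3=0$ together with the interior critical curve $C_3$. A quick inspection of the parametric formulae for $C_3$ confirms this structure: at $u=(p-1)/p$ one has $pu-(p-1)=0$, forcing $\beta_3(u)=0$, so $C_3$ lands on the $\beta_3=0$ face at the critical endpoint of $C_1$, and symmetrically at $u=(q-1)/q$ one has $qu-(q-1)=0$, so $\beta_2(u)=0$ and $C_3$ lands on the $\beta_2=0$ face at the critical endpoint of $C_2$. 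Consequently $C_1\cup C_3\cup C_2$ is the full boundary of $S$ inside the closed octant. A surface-with-boundary in $\mathbb{R}^3$ does not locally separate space near its boundary: close to any interior point of $C_3$ the set $S$ looks like a half-disk with edge $C_3$, and a small loop drawn in a plane transverse to $C_3$ can be chosen to leave from one side of $S$, swing through the free half-space around $C_3$, and arrive at the other side without re-crossing $S$. Splicing such a local detour into any path argument shows that the two sides of $S$ lie in the same connected component of $\Omega\setminus S$, which is therefore a single maximal analyticity region.

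For the second item Theorem \ref{One} delivers the orders essentially verbatim. Continuity of $\psi_\infty^\beta$ itself across $S$ is inherited from the regularity of the construction (the limiting free energy density arises as a continuous limit of the smooth $\psi_n^\beta$). Across $S\setminus C_3$, at least one of the three first partial derivatives of $\psi_\infty^\beta$ jumps while $\psi_\infty^\beta$ stays continuous, so by the definition the transition is of order exactly one. Along $C_3$, all three first partial derivatives remain continuous but each of the six second partial derivatives diverges, so at least one second derivative is discontinuous while every first derivative is continuous, which is precisely the definition of a transition of order exactly two. Combined with the single-phase conclusion above, this establishes the corollary.
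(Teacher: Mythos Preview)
Your deduction is essentially what the paper intends: the corollary is stated without a separate proof and is meant to be read off directly from Theorem~\ref{One} together with the definition of a phase. Your write-up is more explicit than the paper's, and the extra work you do---checking that $C_3$ lands on the faces $\beta_3=0$ and $\beta_2=0$ at the critical endpoints of $C_1$ and $C_2$, and arguing that the free interior edge $C_3$ prevents $S$ from disconnecting the octant---is correct and a genuine addition, since the paper does not address the ``single phase'' (connectedness) claim at all.

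One small correction: your justification for the continuity of $\psi_\infty^\beta$ across $S$ (``it arises as a continuous limit of the smooth $\psi_n^\beta$'') is not valid as stated, since pointwise limits of continuous functions need not be continuous. The clean reason, already available from Theorem~\ref{CD}, is that
\[
\psi_\infty^\beta=\sup_{u\in[0,1]}\Big(\beta_1 u+\beta_2 u^{p}+\beta_3 u^{q}-\tfrac12 u\log u-\tfrac12(1-u)\log(1-u)\Big)
\]
is a supremum of functions that are affine in $\beta$ and jointly continuous in $(u,\beta)$ over the compact index set $[0,1]$; such a supremum is automatically continuous (indeed convex) in $\beta$. With that fix, your identification of the transition orders from Theorem~\ref{One} is exactly right: $\psi_\infty^\beta$ continuous but with first derivatives jumping across $S\setminus C_3$ gives order one there, and first derivatives continuous but second derivatives diverging along $C_3$ gives order two.
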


\begin{remark}
The requirement that the number of edges $p$ in $H_2$ and the
number of edges $q$ in $H_3$ satisfy $2 \leq p \leq q \leq 5p-1$
in the \textbf{Assumptions} is just a technicality. It is expected
that the parameter space would still consist of a single phase
with first order phase transition(s) across one (or more) surfaces
and second order phase transition(s) along the critical curves
should such assumptions fail.
\end{remark}

To derive these results, we will make use of two theorems from
\cite{CD}, which connect the occurrence of a phase transition in
our model with the solution of a certain maximization problem (a
more extensive explanation may be found in \cite{LS}).

\begin{theorem}[Theorem 4.1 in \cite{CD}]
\label{CD} Consider a general $k$-parameter exponential random
graph model (\ref{pmf}). Suppose $\beta_2,...,\beta_k$ are
non-negative. Then the limiting free energy density $\displaystyle
\psi_\infty^\beta$ exists, and is given by
\begin{equation}
\label{lmax} \psi_{\infty}^{\beta}=\sup_{0\leq u\leq
1}\left(\beta_1 u^{E(H_1)}+\cdots+\beta_k
u^{E(H_k)}-\frac{1}{2}u\log u-\frac{1}{2}(1-u)\log(1-u)\right)\\,
\end{equation}
where $E(H_i)$ is the number of edges in $H_i$.
\end{theorem}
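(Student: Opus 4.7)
The plan is to derive the variational formula by combining the Chatterjee--Varadhan large deviation principle for dense random graphs with an argument that the sign conditions on $\beta_2,\ldots,\beta_k$ force the supremum to localize on constant graphons. First, I would rewrite the partition function in terms of the Erd\H{o}s--R\'{e}nyi measure. Setting $T(G_n)=\sum_{i=1}^k \beta_i t(H_i,G_n)$, one has
\[
\exp(n^2 \psi_n^\beta) \;=\; \sum_{G_n\in\mathcal{G}_n}\exp(n^2 T(G_n)) \;=\; 2^{\binom{n}{2}}\,\E_{G(n,1/2)}\bigl[\exp(n^2 T(G_n))\bigr],
\]
so that up to an additive $(\tfrac{1}{2})\log 2$ the question reduces to computing the logarithmic moment generating function of $T$ under the uniform random graph measure $G(n,1/2)$.

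Next I would invoke the Chatterjee--Varadhan LDP, which asserts that $G(n,1/2)$ satisfies a large deviation principle at speed $n^2$ on the space $\widetilde{\mathcal{W}}$ of graphons equipped with the cut metric, with rate function $(\tfrac{1}{2})\iint[h\log(2h)+(1-h)\log(2(1-h))]\,dx\,dy$. Each density $t(H_i,\cdot)$ extends to a continuous functional on $\widetilde{\mathcal{W}}$, so $T$ is continuous, and Varadhan's lemma yields
\[
\psi_\infty^\beta \;=\; \sup_{h\in\widetilde{\mathcal{W}}}\left[\sum_{i=1}^k \beta_i\, t(H_i,h) - I(h)\right],\qquad I(h)=\tfrac{1}{2}\iint[h\log h+(1-h)\log(1-h)]\,dx\,dy.
\]
Absorbing the earlier constant $(\tfrac{1}{2})\log 2$ into the rate function produces exactly this form.

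The final step is to localize the supremum on constant graphons. Setting $u=\iint h$, the entropy comparison $I(h)\ge I_0(u):=(\tfrac{1}{2})[u\log u+(1-u)\log(1-u)]$ is immediate from Jensen's inequality applied to the convex function $x\mapsto x\log x+(1-x)\log(1-x)$; however, the subgraph terms $t(H_i,h)$ need not be bounded above by $u^{E(H_i)}$ pointwise, so one cannot simply compare term by term. The resolution uses a H\"{o}lder / Kruskal--Katona-type estimate to compare $t(H_i,h)$ with $u^{E(H_i)}$, absorbing the discrepancy into the excess entropy $I(h)-I_0(u)$. The sign condition $\beta_i\ge 0$ is what makes this absorption work: for the constant graphon $h\equiv u$ one recovers equality, and any deviation of $h$ away from constant is penalized more strongly by the entropy than it can be rewarded by the positive subgraph terms.

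The main obstacle is this third step. The first two steps are essentially bookkeeping once the Chatterjee--Varadhan LDP is invoked. The third relies on a delicate interplay between the convexity of the entropy and the multiplicative structure of homomorphism densities; this balance breaks down as soon as any $\beta_i$ for $i\ge 2$ becomes negative, which is why the corresponding repulsive regime (addressed for example in Aristoff--Radin) requires a genuinely different analysis.
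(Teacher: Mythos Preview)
The paper does not prove this theorem at all: it is quoted verbatim as Theorem~4.1 of Chatterjee--Diaconis and used as a black box. So there is no ``paper's own proof'' to compare against; I can only assess your outline on its merits.

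Your first two steps (rewriting the partition function as an exponential moment under $G(n,1/2)$, invoking the Chatterjee--Varadhan LDP, and applying Varadhan's lemma to obtain the variational formula over graphons) are correct and are indeed how the result is established in the cited reference.

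The third step, however, contains a genuine gap, and the specific mechanism you propose is wrong. You set $u=\iint h$ and claim that ``any deviation of $h$ away from constant is penalized more strongly by the entropy than it can be rewarded by the positive subgraph terms,'' i.e.\ that
\[
\sum_{i\ge 2}\beta_i\bigl(t(H_i,h)-u^{E(H_i)}\bigr)\ \le\ I(h)-I_0(u).
\]
This fails already for the edge--triangle model. Take $h=\mathbf{1}_{[0,a]^2}$ with $a\in(0,1)$. Then $u=\iint h=a^2$, $t(K_3,h)=a^3=u^{3/2}$, and $I(h)=0$, so the right-hand side equals $-I_0(u)$, a fixed finite number, while the left-hand side equals $\beta_2(u^{3/2}-u^3)$, which exceeds any fixed bound once $\beta_2$ is large. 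Thus at a common edge density the clumped graphon strictly beats the constant one, and no H\"older or Kruskal--Katona estimate can reverse this; those inequalities bound $t(H,h)$ in terms of powers of the edge density in the \emph{wrong direction} for your purpose (Sidorenko-type bounds are lower bounds, Kruskal--Katona gives lower bounds on clique counts).

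The correct localization argument in Chatterjee--Diaconis does not compare a general $h$ to the constant graphon with the same edge density. Instead one shows directly that every maximizer of the graphon variational problem is constant: a maximizer satisfies the Euler--Lagrange equation
\[
\sum_{i}\beta_i\,\Delta_i(h)(x,y)\ =\ \tfrac{1}{2}\log\frac{h(x,y)}{1-h(x,y)},
\]
where $\Delta_i(h)(x,y)$ is the rooted homomorphism density of $H_i$ with one edge pinned at $(x,y)$; one then exploits that for $\beta_i\ge 0$ the left side is monotone in $h$ in a way that forces $h$ to be constant. The sign condition enters through this monotonicity/fixed-point argument, not through an entropy-versus-density comparison at fixed edge count. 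Your honest flag that ``the main obstacle is this third step'' is well placed---but the heuristic you offer for resolving it would lead a reader astray.
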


\begin{theorem}[Theorem 4.2 in \cite{CD}]
\label{gen} Let $G_n$ be an exponential random graph drawn from
(\ref{pmf}). Suppose $\beta_2,...,\beta_k$ are non-negative. Then
$G_n$ behaves like an Erd\H{o}s-R\'{e}nyi graph $G(n, u^*)$ in the
large n limit, where $u^*$ is picked randomly from the set $U$ of
maximizers of (\ref{lmax}).
\end{theorem}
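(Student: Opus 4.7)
The plan is to deduce Theorem \ref{gen} from the Chatterjee--Varadhan large deviation principle (LDP) for the uniform random graph in the cut metric, combined with Theorem \ref{CD}. First, I would associate to every simple graph $G_n$ its empirical graphon $\tilde{G}_n$ in the space $\tilde{\mathcal{W}}$ of graphons modulo measure-preserving transformations, equipped with the cut metric $\delta_\square$. Under the uniform measure on $\mathcal{G}_n$, $\tilde{G}_n$ satisfies an LDP at speed $n^2$ with good rate function $I_0(f) = \frac{1}{2}\int_{[0,1]^2}[f\log f + (1-f)\log(1-f)]\,dx\,dy$ (up to the additive constant coming from the normalization $2^{-\binom{n}{2}}$). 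Since each $t(H_i,\cdot)$ is bounded and $\delta_\square$-continuous by Lov\'{a}sz--Szegedy, the exponential tilting defining $\PR_n^\beta$ fits Varadhan's lemma, yielding an LDP under $\PR_n^\beta$ for $\tilde{G}_n$ with good rate function $J_\beta(f) = \psi_\infty^\beta - T_\beta(f)$, where $T_\beta(f) = \sum_i \beta_i t(H_i,f) - I_0(f)$.

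Second, the LDP forces $\tilde{G}_n$ to concentrate (in $\delta_\square$) on the zero-set $F^*$ of $J_\beta$, i.e.\ on the set of global maximizers of $T_\beta$ over $\tilde{\mathcal{W}}$; for any closed $F \subset \tilde{\mathcal{W}}$ disjoint from $F^*$, $\PR_n^\beta(\tilde{G}_n \in F)$ decays exponentially. Theorem \ref{CD} tells us that $\psi_\infty^\beta = \sup_{0 \leq u \leq 1}\tilde{T}_\beta(u)$ with $\tilde{T}_\beta(u) = \sum_i \beta_i u^{E(H_i)} - \phi(u)$ and $\phi(u) = \frac{1}{2}[u\log u + (1-u)\log(1-u)]$, so every constant graphon $f \equiv u^*$ with $u^* \in U$ lies in $F^*$. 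The crux is to prove the matching rigidity statement $F^* = \{f \equiv u^* : u^* \in U\}$, which is exactly where the sign constraint $\beta_i \geq 0$, $i \geq 2$, is used essentially. For any graphon $f$, Jensen's inequality gives $I_0(f) \geq \phi(\int f)$ with equality iff $f$ is constant; combining this with an averaging/variational argument that exploits how $\beta_i \geq 0$ controls the positive contributions to $t(H_i,f)$ from inhomogeneities against the entropy cost, one argues that a non-constant maximizer would contradict the value $\psi_\infty^\beta = \sup_u \tilde{T}_\beta(u)$ supplied by Theorem \ref{CD}, forcing $f^*$ to be constant almost everywhere with value in $U$.

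Finally, once $F^* = \{f \equiv u^* : u^* \in U\}$ is established, LDP concentration gives $\delta_\square(\tilde{G}_n, \{f \equiv u^* : u^* \in U\}) \to 0$ in probability, and $\tilde{G}_n$ converges in distribution (in the cut metric) to a random constant graphon $f \equiv u^*$ with $u^* \in U$. Cut-metric closeness to the constant graphon $u^*$ is equivalent to $t(H, \tilde{G}_n) \to (u^*)^{E(H)}$ in probability for every finite graph $H$, which is precisely the subgraph-density signature of the Erd\H{o}s--R\'{e}nyi graph $G(n, u^*)$; this is the sense in which $G_n$ behaves like $G(n, u^*)$ in the large-$n$ limit. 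The main obstacle is the rigidity claim in the second step: the naive pointwise bound $t(H_i, f) \leq (\int f)^{E(H_i)}$ fails (simple two-block graphons can give strictly higher triangle density than the corresponding constant graphon), so the argument cannot reduce to a term-by-term inequality and must instead leverage the joint energy--entropy balance under the non-negativity constraint on $\beta_2,\dots,\beta_k$.
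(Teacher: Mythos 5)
First, a structural point: the paper does not prove this statement at all --- it is imported verbatim as Theorem 4.2 of \cite{CD}, so the only meaningful comparison is with the Chatterjee--Diaconis argument. Your outline does follow their actual route (Chatterjee--Varadhan LDP for the empirical graphon in the cut metric, Varadhan's lemma for the tilted measure, concentration on the zero set of the rate function, and the identification of cut-metric closeness to a constant graphon with Erd\H{o}s--R\'{e}nyi behavior), and those steps are sound as sketched.

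However, there is a genuine gap at the one step that carries all the content: the rigidity claim that the maximizers of $T_\beta(f)=\sum_i\beta_i t(H_i,f)-I_0(f)$ over graphons are exactly the constant functions $f\equiv u^*$ with $u^*\in U$. You correctly observe that the naive bound $t(H_i,f)\leq(\int f)^{E(H_i)}$ is false, but you then gesture at ``an averaging/variational argument'' without supplying one, so the proposal is a plan rather than a proof. The missing tool is the generalized H\"{o}lder (Finner) inequality, which gives the \emph{integrated pointwise} bound $t(H,f)\leq\int_{[0,1]^2}f(x,y)^{E(H)}\,dx\,dy$ (valid because each vertex of $H$ has degree at most $E(H)$). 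Since $\beta_i\geq 0$ for $i\geq 2$ and $t(H_1,f)=\int f$ exactly, this yields
\begin{equation*}
\sum_i\beta_i t(H_i,f)-I_0(f)\ \leq\ \int_{[0,1]^2}\Bigl(\beta_1 f+\textstyle\sum_{i\geq 2}\beta_i f^{E(H_i)}-\tfrac{1}{2}f\log f-\tfrac{1}{2}(1-f)\log(1-f)\Bigr)dx\,dy\ \leq\ \sup_{0\leq u\leq 1}\ell(u),
\end{equation*}
which proves Theorem \ref{CD} and, via the equality analysis (equality in H\"{o}lder for a connected $H_i$ with $E(H_i)\geq 2$ and $\beta_i>0$ forces $f$ constant a.e., and the pointwise optimization forces that constant to lie in $U$), establishes $F^*=\{f\equiv u^*:u^*\in U\}$. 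Without this inequality your second step does not close, and the remainder of the argument has nothing to concentrate on.
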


Given the Chatterjee-Diaconis result, computing phase boundaries
for the exponential model (\ref{assumption}) mainly reduces to a
$3$-dimensional calculus problem coupled with probability
estimates. However, as straight-forward as it sounds, to get a
clear picture of the limiting probability distribution and hence
the global structure of a typical random graph $G_n$ drawn from
this model, we need to solve the intricate calculus problem
explicitly and employ various tricks. This mechanism may be
generalized to a $k$-parameter setting (\ref{pmf}), and the
crucial idea (as will be illustrated in the proof of Proposition
\ref{max}) is to minimize the effect of the ordered parameters on
the limiting free energy density one by one.

The rest of this paper is organized as follows. In Section
\ref{MA} we analyze the maximization problem (\ref{lmax}) for
$k=3$ in detail (Proposition \ref{max}) and describe the
transition surface $S$ and the bounding curves $C_1$, $C_2$, and
$C_3$ explicitly (Proposition \ref{char}). In Section \ref{CB} we
investigate the analyticity properties of the limiting free energy
density $\psi^\beta_\infty$ in different parameter regions
(Theorems \ref{A} and \ref{B}) and complete the proof of our main
theorem (Theorem \ref{One}).

\section{Maximization Analysis}
\label{MA}
\begin{figure}
\centering
\includegraphics[clip=true, height=3.5in]{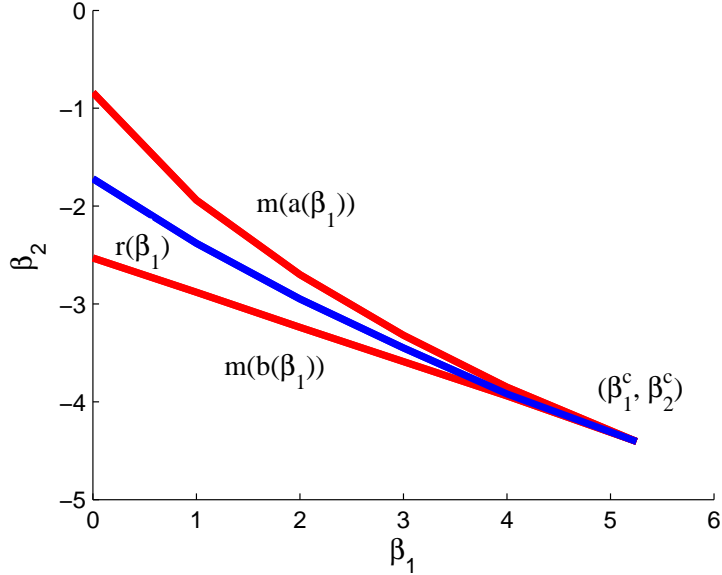}
\caption{The V-shaped region (with phase transition curve
$r(\beta_1)$ inside) in the $(\beta_1, \beta_2)$ plane. Graph
drawn for $\beta_3=2$, $p=3$, and $q=5$.} \label{Vshape}
\end{figure}

\begin{proposition}
\label{max} Fix $\beta_3$ and integers $p$ and $q$ with $2 \leq p
\leq q \leq 5p-1$. Consider the maximization problem for
\begin{equation}
\label{l} l_{\beta_3}(u; \beta_1,
\beta_2)=\beta_1u+\beta_2u^p+\beta_3u^q-\frac{1}{2}u\log
u-\frac{1}{2}(1-u)\log(1-u)
\end{equation}
on the interval $[0,1]$, where $-\infty<\beta_1<\infty$ and
$-\infty<\beta_2<\infty$ are parameters. Then there is a V-shaped
region in the $(\beta_1, \beta_2)$ plane with corner point
$(\beta_1^c, \beta_2^c)$,
\begin{eqnarray}
\label{point} \beta_1^c&=&\frac{1}{2}\log
\frac{u_0}{1-u_0}-\frac{1}{2(p-1)(1-u_0)}+\frac{pu_0-(p-1)}{2(p-1)(q-1)(1-u_0)^2}, \notag\\
\beta_2^c&=&\frac{qu_0-(q-1)}{2p(p-1)(p-q)u_0^{p-1}(1-u_0)^2},
\end{eqnarray}
where $u_0$ is uniquely determined by
\begin{eqnarray}
\label{beta3}
\beta_3=\frac{pu_0-(p-1)}{2q(q-1)(q-p)u_0^{q-1}(1-u_0)^2}.
\end{eqnarray}
Outside this region, $l_{\beta_3}(u)$ has only one local maximizer
(hence global maximizer) $u^*$; Inside this region,
$l_{\beta_3}(u)$ has exactly two local maximizers $u_1^*$ and
$u_2^*$. For every $\beta_1$ inside this V-shaped region
($\beta_1<\beta_1^c$), there is a unique decreasing
$\beta_2=r_{\beta_3}(\beta_1)$ such that $u_1^*$ and $u_2^*$ are
both global maximizers for $l_{\beta_3}(u; \beta_1,
r_{\beta_3}(\beta_1))$ (see Figures \ref{Vshape} and
\ref{curves}).
\end{proposition}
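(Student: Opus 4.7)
The plan is to carry out a cusp-catastrophe analysis of $l_{\beta_3}(\,\cdot\,;\beta_1,\beta_2)$, treating $(\beta_1,\beta_2)$ as unfolding parameters and $\beta_3\ge 0$ as a fixed background. All of the information is contained in the first three $u$-derivatives,
\[
l'_{\beta_3}(u) = \beta_1+p\beta_2 u^{p-1}+q\beta_3 u^{q-1}-\tfrac12\log\tfrac{u}{1-u},
\]
\[
l''_{\beta_3}(u) = p(p-1)\beta_2 u^{p-2}+q(q-1)\beta_3 u^{q-2}-\tfrac{1}{2u(1-u)},
\]
\[
l'''_{\beta_3}(u) = p(p-1)(p-2)\beta_2 u^{p-3}+q(q-1)(q-2)\beta_3 u^{q-3}+\tfrac{1-2u}{2u^2(1-u)^2};
\]
a critical point is a saddle-node exactly when $l''_{\beta_3}$ also vanishes, and a cusp when $l'''_{\beta_3}$ vanishes as well. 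First I would describe the boundary of the V-region as the saddle-node discriminant: for each $u\in(0,1)$ the pair $l'_{\beta_3}(u)=l''_{\beta_3}(u)=0$ is linear in $(\beta_1,\beta_2)$ with a unique solution, which after elimination is exactly the parametric curve of the theorem with $u$ in place of $u_0$. The corner of the V is obtained by additionally imposing $l'''_{\beta_3}(u)=0$; the combination $u\,l'''_{\beta_3}-(p-2)\,l''_{\beta_3}$ kills the $\beta_2$-term and reduces to
\[
\beta_3 \;=\; R(u) \;:=\; \frac{pu-(p-1)}{2q(q-1)(q-p)\,u^{q-1}(1-u)^2}.
\]

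The main technical obstacle is showing that $R$ is strictly monotone on $[(p-1)/p,(q-1)/q]$, so that $u_0$ is uniquely determined by each $\beta_3\ge 0$. A direct computation gives
\[
\frac{d}{du}\log R(u) \;=\; \frac{p}{pu-(p-1)}-\frac{q-1}{u}+\frac{2}{1-u},
\]
and clearing denominators reduces the sign question to the quadratic
\[
Q(u) \;:=\; pq\,u^2+(p+q+1-2pq)u+(pq-p-q+1),
\]
whose discriminant works out to $p^2+q^2-6pq+2p+2q+1$. The hypothesis $q\le 5p-1$ suffices to make this discriminant non-positive (tight for $p=2$, where it factors as $(q-1)(q-9)$), so $Q\ge 0$ on $(0,1)$ and $R$ is strictly monotone increasing on the relevant interval. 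Combined with $R((p-1)/p)=0$ and $R(u)\to\infty$ as $u\to 1^-$, this yields a unique $u_0$, and back-substitution into the linear system of Step 1 delivers the stated $(\beta_1^c,\beta_2^c)$.

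With the V-boundary understood, I would count critical points of $l_{\beta_3}$ at fixed $(\beta_1,\beta_2)$ by recasting $l'_{\beta_3}(u)=0$ as the intersection of the strictly increasing logit $\tfrac12\log\tfrac{u}{1-u}$ with $\beta_1+p\beta_2 u^{p-1}+q\beta_3 u^{q-1}$; the convexity pattern derived from Step 2 (at most two sign changes of $l''_{\beta_3}$, since the concave-down entropy term $1/(2u(1-u))$ meets the monotone increasing polynomial $p(p-1)\beta_2 u^{p-2}+q(q-1)\beta_3 u^{q-2}$ in at most two points) forces exactly one crossing outside the V-region and exactly three inside, the outer two being the local maxima $u_1^*<u_2^*$. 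Finally, for the equal-height curve define $h(\beta_1,\beta_2):=l_{\beta_3}(u_2^*)-l_{\beta_3}(u_1^*)$ on the interior of the V. The envelope theorem, applied using $l'_{\beta_3}(u_i^*)=0$, gives
\[
\nabla h \;=\; \bigl(u_2^*-u_1^*,\ (u_2^*)^p-(u_1^*)^p\bigr),
\]
with both components strictly positive since $u_1^*<u_2^*$ and $p\ge 2$. Hence $h$ is strictly monotone in each variable, the Implicit Function Theorem produces a unique smooth curve $\beta_2=r_{\beta_3}(\beta_1)$, and its slope
\[
r_{\beta_3}'(\beta_1) \;=\; -\frac{u_2^*-u_1^*}{(u_2^*)^p-(u_1^*)^p} \;<\; 0
\]
gives the stated monotonicity; the curve reaches the corner by continuity as $u_1^*,u_2^*\to u_0$.
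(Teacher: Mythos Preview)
Your cusp-catastrophe framing is sound and the key technical computation---the discriminant of $Q(u)$ and hence the monotonicity of $R$---is correct and is exactly the same calculation the paper does (their $f(u)$ is $-2q(q-1)(q-p)R(u)$). However, one step does not go through as written. In the critical-point count you assert that $l''_{\beta_3}$ has at most two sign changes ``since the concave-down entropy term $1/(2u(1-u))$ meets the monotone increasing polynomial $p(p-1)\beta_2 u^{p-2}+q(q-1)\beta_3 u^{q-2}$ in at most two points.'' Neither half of this holds: $1/(2u(1-u))$ is convex, not concave, and the polynomial is \emph{not} monotone when $\beta_2<0$ (which certainly occurs, since $\beta_2^c<0$ whenever $u_0>(q-1)/q$, i.e.\ for large $\beta_3$). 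Even with both corrected, ``monotone meets convex'' does not bound the number of intersections by two. The right argument is the one you already have the ingredients for: your monotonicity of $R$ says $F'(u):=\tfrac{d}{du}\bigl(l''_{\beta_3}(u)/u^{p-2}\bigr)$ vanishes exactly once, so $F$ is unimodal and $l''_{\beta_3}=u^{p-2}F$ has at most two zeros. This is precisely how the paper proceeds. A smaller slip: the interval $[(p-1)/p,(q-1)/q]$ you quote for $u_0$ is the range relevant to $\beta_2^c\ge 0$ \emph{and} $\beta_3\ge 0$ simultaneously; for the proposition (fixed $\beta_3\ge 0$, arbitrary $\beta_2$) you need $u_0\in[(p-1)/p,1)$, which your own remark $R(u)\to\infty$ as $u\to 1^-$ already covers. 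Also, the V-boundary is \emph{not} ``the parametric curve of the theorem with $u$ in place of $u_0$'': those formulas use $\beta_3=R(u_0)$ and only collapse to the stated expressions at the corner; away from it the boundary is $(\beta_1,\beta_2)=(-n(u),m(u))$ with $\beta_3$ held fixed.

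Where your argument genuinely differs from the paper is in the construction of the equal-height curve $r_{\beta_3}$. The paper shows existence by an area argument (the signed area under $l'_{\beta_3}$ between $u_1^*$ and $u_2^*$ changes sign as $\beta_2$ crosses the V), and monotonicity by a qualitative shift argument. Your use of the envelope theorem to get $\nabla h=(u_2^*-u_1^*,\,(u_2^*)^p-(u_1^*)^p)$ is cleaner: it gives strict monotonicity of $h$ in both variables and the explicit slope $r_{\beta_3}'(\beta_1)=-(u_2^*-u_1^*)/\bigl((u_2^*)^p-(u_1^*)^p\bigr)<0$ at once. To make this complete you should also say why $h$ actually changes sign on each vertical slice of the V (it does: at the lower boundary $u_2^*$ merges with the saddle and $h<0$, at the upper boundary $u_1^*$ merges and $h>0$), so that the zero level set is nonempty for every $\beta_1<\beta_1^c$.
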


\begin{figure}
\centering
\includegraphics[clip=true, height=3.5in]{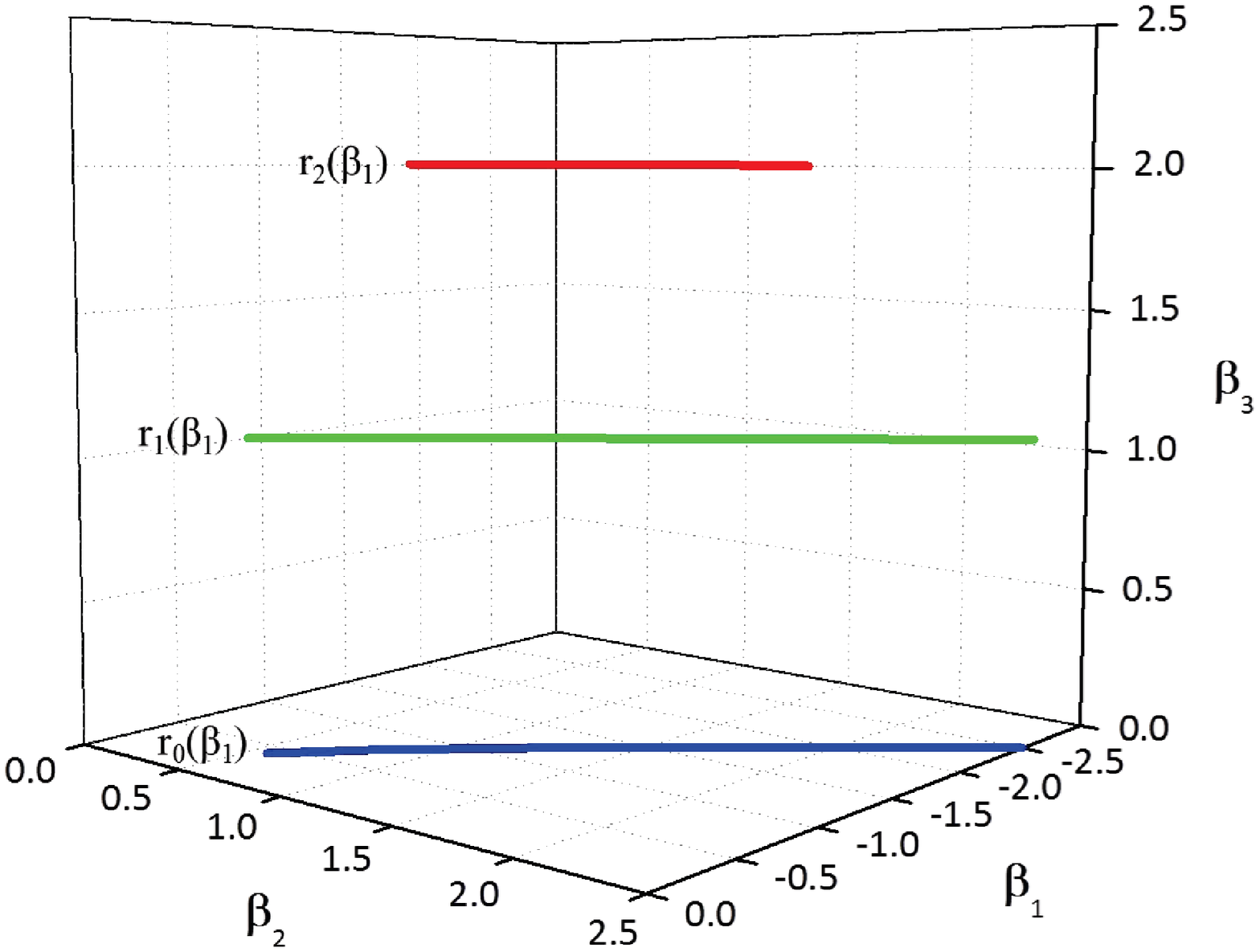}
\caption{The phase transition curves $r(\beta_1)$ (corresponding
to $\beta_3=0$, $\beta_3=1$, and $\beta_3=2$) in the $(\beta_1,
\beta_2, \beta_3)$ space. Graph drawn for $p=3$ and $q=5$.}
\label{curves}
\end{figure}

\begin{proof}
The location of maximizers of $l_{\beta_3}(u)$ on the interval
$[0,1]$ is closely related to the properties of its derivatives
$l'_{\beta_3}(u)$ and $l''_{\beta_3}(u)$:
\begin{equation}
\label{l'}
l'_{\beta_3}(u)=\beta_1+p\beta_2u^{p-1}+q\beta_3u^{q-1}-\frac{1}{2}\log\frac{u}{1-u},
\end{equation}
\begin{equation}
\label{l''}
l''_{\beta_3}(u)=p(p-1)\beta_2u^{p-2}+q(q-1)\beta_3u^{q-2}-\frac{1}{2u(1-u)}.
\end{equation}

We first analyze the properties of $l''_{\beta_3}(u)$ on the
interval $[0, 1]$. Consider instead
\begin{equation}
F(u)=p(p-1)\beta_2+q(q-1)\beta_3u^{q-p}-\frac{1}{2u^{p-1}(1-u)},
\end{equation}
which is obtained by factorizing $u^{p-2}$ out of
$l''_{\beta_3}(u)$. Note that in doing so the effect of $\beta_2$
is minimized as varying $\beta_2$ only shifts the graph of $F(u)$
upward/downward and does not affect its shape. To examine the
effect of $\beta_3$ on $F(u)$, we take one more derivative,
\begin{equation}
F'(u)=q(q-1)(q-p)\beta_3u^{q-p-1}+\frac{(p-1)-pu}{2u^{p}(1-u)^2}.
\end{equation}
Similarly as before, we factor $u^{q-p-1}$ out of $F'(u)$ to
minimize the effect of $\beta_3$. Let
\begin{equation}
f(u)=\frac{(p-1)-pu}{u^{q-1}(1-u)^2}
\end{equation}
so that
\begin{equation}
F'(u)=\frac{1}{2}u^{q-p-1}(2q(q-1)(q-p)\beta_3+f(u)).
\end{equation}

\begin{figure}
\centering
\includegraphics[clip=true, width=6in]{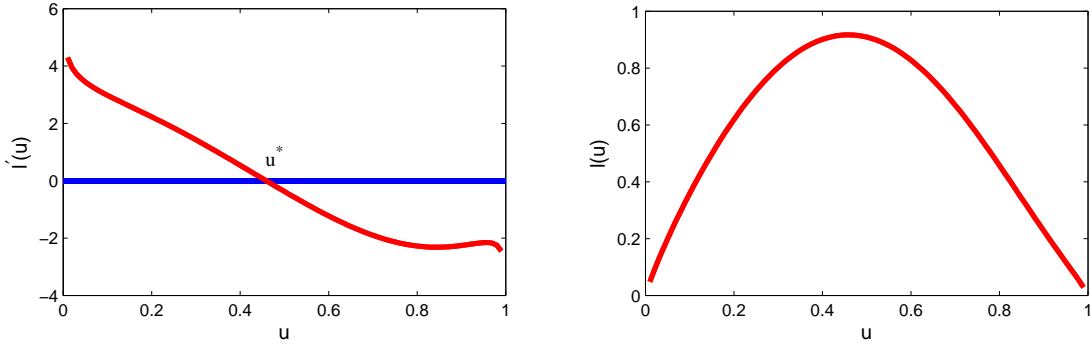}
\caption{Outside the V-shaped region, $l_{\beta_3}(u)$ has a
unique local maximizer (hence global maximizer) $u^*$. Graph drawn
for $\beta_1=2$, $\beta_2=-4$, $\beta_3=2$, $p=3$, and $q=5$.}
\label{below}
\end{figure}

We claim that the condition $2\leq p\leq q\leq 5p-1$ guarantees
that $f(u)$ is monotonically decreasing on $[0, 1]$. Independent
of $p$ and $q$, $f(0)=\infty$ and $f(1)=-\infty$. Its derivative
$f'(u)$ is given by
\begin{equation}
f'(u)=-\frac{pqu^2+(p+q+1-2pq)u+(p-1)(q-1)}{u^q(1-u)^3}.
\end{equation}
Rearranging terms in the discriminant $\Delta$ of the numerator of
$f'(u)$ yields a quadratic equation in $q$,
\begin{equation}
\Delta=q^2+2(1-3p)q+(p+1)^2
\end{equation}
with two zeros
\begin{equation}
q_{1,2}=(3p-1)\pm 2\sqrt{2(p^2-p)}.
\end{equation}
We can easily check that $q_1\leq p$ and $q_2\geq 5p-1$. As
$q_1\leq q\leq q_2$ is equivalent to $\Delta\leq 0$, this verifies
our claim.

An immediate corollary is that there is a unique $u_0$ in $(0, 1)$
such that $F'(u_0)=0$, with $F'(u)>0$ for $u<u_0$, and $F'(u)<0$
for $u>u_0$. The correspondence between $\beta_3$ and $u_0$ is
one-to-one, and we may alternatively describe $\beta_3$ by
\begin{equation}
\beta_3=\frac{pu_0-(p-1)}{2q(q-1)(q-p)u_0^{q-1}(1-u_0)^2}.
\end{equation}
This further implies that $F(u)$ is increasing from $0$ to $u_0$,
and decreasing from $u_0$ to $1$, with the global maximum achieved
at $u_0$,
\begin{equation}
F(u_0)=p(p-1)\beta_2+\frac{qu_0-(q-1)}{2(q-p)u_0^{p-1}(1-u_0)^2}.
\end{equation}
Let
\begin{equation}
\beta_2^c=\frac{qu_0-(q-1)}{2p(p-1)(p-q)u_0^{p-1}(1-u_0)^2}
\end{equation}
so that $F(u_0; \beta_2^c)=0$. As $F(u)$ and $l''_{\beta_3}(u)$
always carry the same sign, this shows that for $\beta_2 \leq
\beta_2^c$, $l''_{\beta_3}(u)\leq 0$ on the whole interval $[0,
1]$; whereas for $\beta_2>\beta_2^c$, $l''_{\beta_3}(u)$ takes on
both positive and negative values, and we denote the transition
points by $u_1$ and $u_2$ ($u_1<u_0<u_2$), which are solely
determined by $\beta_2$, and vice versa. Let
\begin{equation}
m(u)=\frac{1}{2p(p-1)u^{p-1}(1-u)}+\frac{(p-1)-pu_0}{2p(p-1)(q-p)u_0^{q-1}(1-u_0)^2}u^{q-p}
\end{equation}
so that $\beta_2=m(u_1)=m(u_2)$. As $m(u)=\beta_2-F(u)/p(p-1)$, we
have $m(0)=m(1)=\infty$, $m(u)$ is decreasing from $0$ to $u_0$,
and increasing from $u_0$ to $1$.

\begin{figure}
\centering
\includegraphics[clip=true, width=6in]{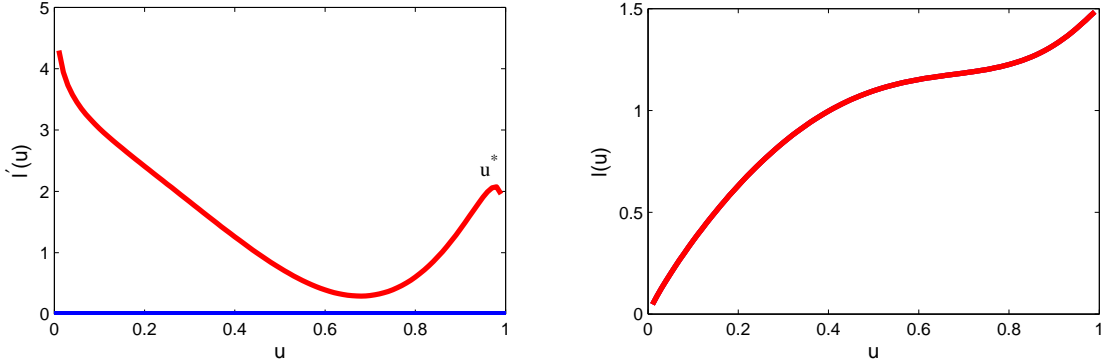}
\caption{Outside the V-shaped region, $l_{\beta_3}(u)$ has a
unique local maximizer (hence global maximizer) $u^*$. Graph drawn
for $\beta_1=2$, $\beta_2=-2.5$, $\beta_3=2$, $p=3$, and $q=5$.}
\label{above}
\end{figure}

Based on the properties of $l''_{\beta_3}(u)$, we next analyze the
properties of $l'_{\beta_3}(u)$ on the interval $[0, 1]$. For
$\beta_2 \leq \beta_2^c$, $l'_{\beta_3}(u)$ is monotonically
decreasing. For $\beta_2>\beta_2^c$, $l'_{\beta_3}(u)$ is
decreasing from $0$ to $u_1$, increasing from $u_1$ to $u_2$, then
decreasing again from $u_2$ to $1$. For reasons that will become
clear in a moment, we write down the explicit expressions of
$l'_{\beta_3}(u_1)$ and $l'_{\beta_3}(u_2)$:
\begin{equation}
l'_{\beta_3}(u_1)=\beta_1+\frac{1}{2(p-1)(1-u_1)}-\frac{1}{2}\log\frac{u_1}{1-u_1}+\frac{(p-1)-pu_0}{2(p-1)(q-1)u_0^{q-1}(1-u_0)^2}u_1^{q-1},
\end{equation}
\begin{equation}
l'_{\beta_3}(u_2)=\beta_1+\frac{1}{2(p-1)(1-u_2)}-\frac{1}{2}\log\frac{u_2}{1-u_2}+\frac{(p-1)-pu_0}{2(p-1)(q-1)u_0^{q-1}(1-u_0)^2}u_2^{q-1}.
\end{equation}

Finally, based on the properties of $l'_{\beta_3}(u)$ and
$l''_{\beta_3}(u)$, we analyze the properties of $l_{\beta_3}(u)$
on the interval $[0, 1]$. Independent of $p$ and $q$,
$l_{\beta_3}(u)$ is a bounded continuous function,
$l'_{\beta_3}(0)=\infty$, and $l'_{\beta_3}(1)=-\infty$, so
$l_{\beta_3}(u)$ can not be maximized at $0$ or $1$. For $\beta_2
\leq \beta_2^c$, $l'_{\beta_3}(u)$ crosses the $u$-axis only once,
going from positive to negative. Thus $l_{\beta_3}(u)$ has a
unique local maximizer (hence global maximizer) $u^*$. For
$\beta_2> \beta_2^c$, the situation is more complicated. If
$l'_{\beta_3}(u_1)\geq 0$ (resp. $l'_{\beta_3}(u_2)\leq 0$),
$l_{\beta_3}(u)$ has a unique local maximizer (hence global
maximizer) at a point $u^*>u_2$ (resp. $u^*<u_1$). If
$l'_{\beta_3}(u_1)<0<l'_{\beta_3}(u_2)$, then $l_{\beta_3}(u)$ has
two local maximizers $u_1^*$ and $u_2^*$, with
$u_1^*<u_1<u_0<u_2<u_2^*$ (see Figures \ref{below} and
\ref{above}).

\begin{figure}
\centering
\includegraphics[clip=true, width=6in]{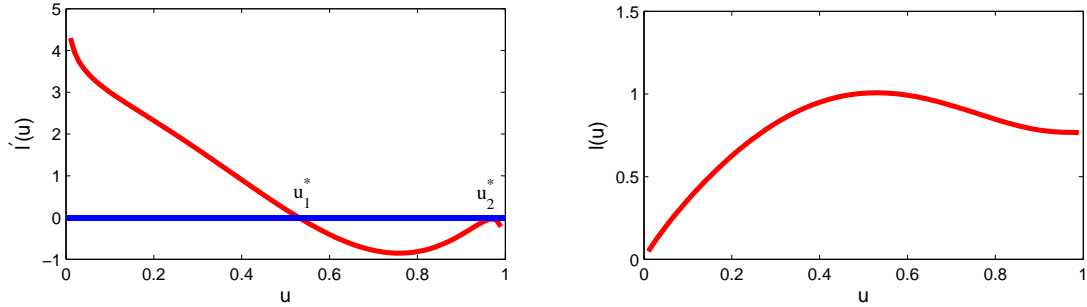}
\caption{Along the lower bounding curve $m(b(\beta_1))$ of the
V-shaped region, $l'_{\beta_3}(u)$ has two zeros $u_1^*$ and
$u_2^*$, but only $u_1^*$ is the global maximizer for
$l_{\beta_3}(u)$. Graph drawn for $\beta_1=2$, $\beta_2=-3.24$,
$\beta_3=2$, $p=3$, and $q=5$.} \label{lower}
\end{figure}

Let
\begin{equation}
n(u)=\frac{1}{2(p-1)(1-u)}-\frac{1}{2}\log\frac{u}{1-u}+\frac{(p-1)-pu_0}{2(p-1)(q-1)u_0^{q-1}(1-u_0)^2}u^{q-1}
\end{equation}
so that $l'_{\beta_3}(u_1)=\beta_1+n(u_1)$ and
$l'_{\beta_3}(u_2)=\beta_1+n(u_2)$. Independent of $p$ and $q$,
$n(0)=\infty$ and $n(1)=\infty$. Its derivative $n'(u)$ is given
by
\begin{eqnarray}
n'(u)&=&\frac{1}{2(p-1)}u^{q-2}\left(\frac{(p-1)-pu_0}{u_0^{q-1}(1-u_0)^2}-\frac{(p-1)-pu}{u^{q-1}(1-u)^2}\right)\notag\\
&=&\frac{1}{2(p-1)}u^{q-2}\left(f(u_0)-f(u)\right).
\end{eqnarray}
As $f(u)$ is monotonically decreasing, $n(u)$ is decreasing from
$0$ to $u_0$, and increasing from $u_0$ to $1$, with the global
minimum achieved at $u_0$,
\begin{equation}
n(u_0)=\frac{1}{2(p-1)(1-u_0)}-\frac{1}{2}\log
\frac{u_0}{1-u_0}+\frac{(p-1)-pu_0}{2(p-1)(q-1)(1-u_0)^2}.
\end{equation}
This implies that $l'_{\beta_3}(u_1; \beta_1, \beta_2^c) \geq0$
for
\begin{equation}
\beta_1\geq \beta_1^c=\frac{1}{2}\log
\frac{u_0}{1-u_0}-\frac{1}{2(p-1)(1-u_0)}+\frac{pu_0-(p-1)}{2(p-1)(q-1)(1-u_0)^2}.
\end{equation}
The only possible region in the $(\beta_1, \beta_2)$ plane where
$l'_{\beta_3}(u_1)<0<l'_{\beta_3}(u_2)$ is thus bounded by
$\beta_1<\beta_1^c$ and $\beta_2>\beta_2^c$.

\begin{figure}
\centering
\includegraphics[clip=true, width=6in]{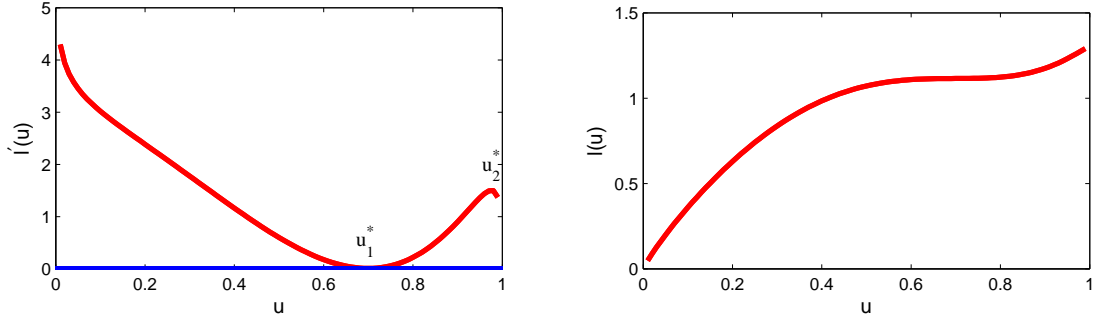}
\caption{Along the upper bounding curve $m(a(\beta_1))$ of the
V-shaped region, $l'_{\beta_3}(u)$ has two zeros $u_1^*$ and
$u_2^*$, but only $u_2^*$ is the global maximizer for
$l_{\beta_3}(u)$. Graph drawn for $\beta_1=2$, $\beta_2=-2.7$,
$\beta_3=2$, $p=3$, and $q=5$.} \label{upper}
\end{figure}

We now analyze the behavior of $l'_{\beta_3}(u_1)$ and
$l'_{\beta_3}(u_2)$ more closely when $\beta_1$ and $\beta_2$ are
chosen from this region. Recall that $u_1<u_0<u_2$. By
monotonicity of $n(u)$ on the intervals $(0, u_0)$ and $(u_0, 1)$,
there exist continuous functions $a(\beta_1)$ and $b(\beta_1)$ of
$\beta_1$, such that $l'_{\beta_3}(u_1)<0$ for $u_1>a(\beta_1)$
and $l'_{\beta_3}(u_2)>0$ for $u_2>b(\beta_1)$. As $\beta_1\to
-\infty$, $a(\beta_1)\to 0$ and $b(\beta_1)\to 1$. $a(\beta_1)$ is
an increasing function of $\beta_1$, whereas $b(\beta_1)$ is a
decreasing function, and they satisfy
\begin{equation}
n(a(\beta_1))=n(b(\beta_1))=-\beta_1.
\end{equation}
The restrictions on $u_1$ and $u_2$ yield restrictions on
$\beta_2$, and we have $l'_{\beta_3}(u_1)<0$ for
$\beta_2<m(a(\beta_1))$ and $l'_{\beta_3}(u_2)>0$ for
$\beta_2>m(b(\beta_1))$. As $\beta_1\to -\infty$,
$m(a(\beta_1))\to \infty$ and $m(b(\beta_1))\to \infty$.
$m(a(\beta_1))$ and $m(b(\beta_1))$ are both decreasing functions
of $\beta_1$, and they satisfy
\begin{equation}
l'_{\beta_3}(u_1; \beta_1, m(a(\beta_1)))=l'_{\beta_3}(u_2;
\beta_1, m(b(\beta_1)))=0.
\end{equation}
As $l'_{\beta_3}(u_2; \beta_1, \beta_2)>l'_{\beta_3}(u_1; \beta_1,
\beta_2)$ for every $(\beta_1, \beta_2)$, the curve
$m(b(\beta_1))$ must lie below the curve $m(a(\beta_1))$, and
together they generate the bounding curves of the V-shaped region
in the $(\beta_1, \beta_2)$ plane with corner point $(\beta_1^c,
\beta_2^c)$ where two local maximizers exist for $l_{\beta_3}(u)$
(see Figures \ref{lower} and \ref{upper}).

Fix an arbitrary $\beta_1<\beta_1^c$, we examine the effect of
varying $\beta_2$ on the graph of $l'_{\beta_3}(u)$. It is clear
that $l'_{\beta_3}(u)$ shifts upward as $\beta_2$ increases and
downward as $\beta_2$ decreases. As a result, as $\beta_2$ gets
large, the positive area bounded by the curve $l'_{\beta_3}(u)$
increases, whereas the negative area decreases. By the fundamental
theorem of calculus, the difference between the positive and
negative areas is the difference between $l_{\beta_3}(u_2^*)$ and
$l_{\beta_3}(u_1^*)$, which goes from negative
($l'_{\beta_3}(u_2)=0$, $u_1^*$ is the global maximizer) to
positive ($l'_{\beta_3}(u_1)=0$, $u_2^*$ is the global maximizer)
as $\beta_2$ goes from $m(b(\beta_1))$ to $m(a(\beta_1))$. Thus
there must be a unique $\beta_2:
m(b(\beta_1))<\beta_2<m(a(\beta_1))$ such that $u_1^*$ and $u_2^*$
are both global maximizers, and we denote this $\beta_2$ by
$r_{\beta_3}(\beta_1)$ (see Figure \ref{rcurve}). The parameter
values of $(\beta_1, r_{\beta_3}(\beta_1))$ are exactly the ones
for which positive and negative areas bounded by $l'_{\beta_3}(u)$
equal each other. An increase in $\beta_1$ induces an upward shift
of $l'_{\beta_3}(u)$, and must be balanced by a decrease in
$\beta_2$. Similarly, a decrease in $\beta_1$ induces a downward
shift of $l'_{\beta_3}(u)$, and must be balanced by an increase in
$\beta_2$. This justifies that $r_{\beta_3}$ is monotonically
decreasing in $\beta_1$.
\end{proof}

\begin{figure}
\centering
\includegraphics[clip=true, width=6in]{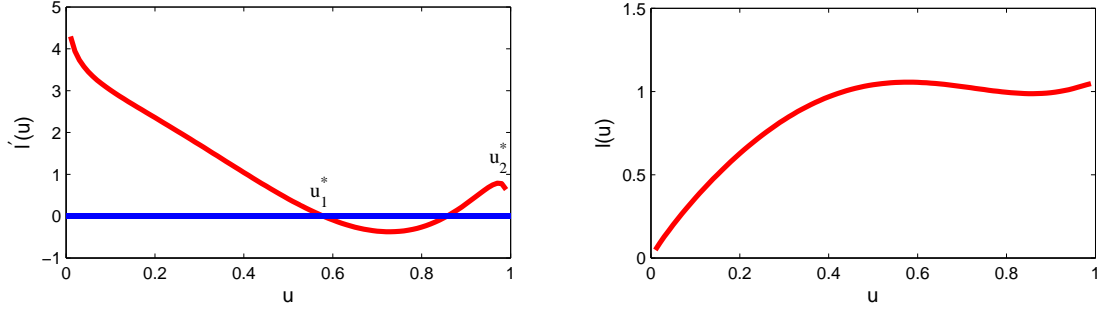}
\caption{Along the phase transition curve $r(\beta_1)$,
$l_{\beta_3}(u)$ has two local maximizers $u_1^*$ and $u_2^*$, and
both are global maximizers for $l_{\beta_3}(u)$. Graph drawn for
$\beta_1=2$, $\beta_2=-2.95$, $\beta_3=2$, $p=3$, and $q=5$.}
\label{rcurve}
\end{figure}

The following universality result shows that independent of the
specific local features that are incorporated into the exponential
random graph model (\ref{assumption}), the transition surface $S$
asymptotically approaches a common plane
$\beta_1+\beta_2+\beta_3=0$.

\begin{corollary}[Universality]
\label{coro} Fix $\beta_3\geq 0$. The transition curve
$\beta_2=r_{\beta_3}(\beta_1)$ displays a universal asymptotic
behavior as $\beta_1 \to -\infty$:
\begin{equation}
\lim_{\beta_1 \to
-\infty}|r_{\beta_3}(\beta_1)+\beta_1+\beta_3|=0.
\end{equation}
\end{corollary}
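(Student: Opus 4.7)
The plan is to use the two conditions characterizing a point on the transition curve—the derivative condition $l'_{\beta_3}(u_i^*) = 0$ for $i = 1, 2$ and the equal-value condition $l_{\beta_3}(u_1^*) = l_{\beta_3}(u_2^*)$—and to push them through the limit $\beta_1 \to -\infty$. By Proposition \ref{max}, $r_{\beta_3}(\beta_1)$ is squeezed between the bounding branches $m(b(\beta_1))$ and $m(a(\beta_1))$, both of which tend to $+\infty$ as $\beta_1 \to -\infty$; hence $\beta_2 = r_{\beta_3}(\beta_1) \to +\infty$. The ordering $u_1^* < u_1 < u_0 < u_2 < u_2^*$ together with the inflection equation $l''_{\beta_3}(u_i) = 0$ (whose entropy side $1/[2u(1-u)]$ must balance the diverging polynomial side) then forces $u_1 \to 0$ and $u_2 \to 1$, so $u_1^* \to 0$ and $u_2^* \to 1$.

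The next step is quantitative: set $\alpha = u_1^*$ and $\gamma = 1 - u_2^*$. Isolating the dominant logarithmic term in $l'_{\beta_3}(\alpha) = 0$ gives $\tfrac{1}{2}\log\alpha \approx \beta_1$, so $\alpha$ is exponentially small in $|\beta_1|$; an analogous Taylor expansion of $l'_{\beta_3}(1-\gamma) = 0$ around $u = 1$ forces $\gamma$ to be exponentially small as well. A mild bootstrap is needed to make this rigorous, since the naive upper bound $\beta_2 \le m(a(\beta_1)) = O(e^{2(p-1)|\beta_1|})$ is itself exponential in $|\beta_1|$ and by itself too crude; one first sharpens it by using the inflection relation $\beta_2 \asymp u_1^{-(p-1)}$. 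Once the bootstrap is carried out, each of $\beta_1 \alpha,\ \beta_2 \alpha^p,\ \beta_3 \alpha^q,\ \beta_1 \gamma,\ \beta_2 \gamma,\ \beta_3 \gamma$ is $o(1)$. This step is the main technical obstacle: the circular dependence between the size of $\beta_2$ and the rate at which the two maximizers escape to the endpoints has to be broken by a careful joint use of the inflection and first-derivative equations.

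With the rate control in place, decompose $l_{\beta_3}(u) = L(u) + H(u)$ where $L(u) = \beta_1 u + \beta_2 u^p + \beta_3 u^q$ and $H(u) = -\tfrac{1}{2}u\log u - \tfrac{1}{2}(1-u)\log(1-u)$ is the bounded binary entropy vanishing at $0$ and $1$. The rate estimates immediately give $l_{\beta_3}(u_1^*) = L(\alpha) + H(\alpha) = o(1)$. For $u_2^*$, the identity $l_{\beta_3}(u_2^*) - l_{\beta_3}(1) = -\int_{u_2^*}^{1} l'_{\beta_3}(u)\,du$ combined with a Taylor expansion of $l'_{\beta_3}$ around $u = 1$ and the derivative condition $l'_{\beta_3}(u_2^*) = 0$ yields $l_{\beta_3}(u_2^*) = \beta_1 + \beta_2 + \beta_3 + o(1)$. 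Equating the two values gives $o(1) = \beta_1 + \beta_2 + \beta_3 + o(1)$, i.e., $r_{\beta_3}(\beta_1) + \beta_1 + \beta_3 \to 0$, which is the asserted universality.
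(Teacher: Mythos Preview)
Your approach is correct but takes a substantially different route from the paper. The paper's proof is a two-line ``guess and verify'': substitute $\beta_2=-\beta_1-\beta_3$ and observe that
\[
l_{\beta_3}(u;\beta_1,-\beta_1-\beta_3)=\beta_1(u-u^p)+\beta_3(u^q-u^p)+H(u)\to -\infty
\]
for every $u\in(0,1)$ as $\beta_1\to-\infty$, so the limiting maximizers are the endpoints $u=0$ and $u=1$, both with value $0$; since the existence of two global maximizers is precisely what characterizes the transition curve (Proposition~\ref{max}), this pins $r_{\beta_3}(\beta_1)$ to $-\beta_1-\beta_3$ asymptotically. You instead work forward from the defining equations $l'_{\beta_3}(u_i^*)=0$ and $l_{\beta_3}(u_1^*)=l_{\beta_3}(u_2^*)$, drive the maximizers to the endpoints, and control the rates via a bootstrap to extract $l_{\beta_3}(u_1^*)=o(1)$ and $l_{\beta_3}(u_2^*)=\beta_1+\beta_2+\beta_3+o(1)$.

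The paper's argument is slicker but leaves the final implication (``limiting function has two equal maxima at the endpoints $\Rightarrow$ the finite-$\beta_1$ transition curve is asymptotic to the guessed line'') somewhat implicit; a continuity/uniqueness argument is tacitly assumed. Your approach is heavier but makes every step explicit, and the bootstrap you sketch---using the inflection relation $\beta_2\asymp u_1^{-(p-1)}$ to bound $\beta_2(u_1^*)^{p-1}=O(1)$, then feeding this back into $l'_{\beta_3}(u_1^*)=0$ to get $u_1^*\asymp e^{2\beta_1}$---is exactly what is needed to break the circular dependence and does close the argument.
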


\begin{proof}
By Proposition \ref{max}, it suffices to show that as $\beta_1\to
-\infty$, $l_{\beta_3}(u; \beta_1, -\beta_1-\beta_3)$ has two
global maximizers $u_1^*$ and $u_2^*$. This is easy when we
realize that as $\beta_1\to -\infty$, $l_{\beta_3}(u; \beta_1,
-\beta_1-\beta_3)\to -\infty$ for every $u$ in $(0, 1)$. The
limiting maximizers on $[0, 1]$ are thus $u_1^*=0$ and $u_2^*=1$,
with $l_{\beta_3}(u_1^*)=l_{\beta_3}(u_2^*)=0$.
\end{proof}

\begin{proposition}
\label{char} As $\beta_3\geq 0$ varies, the transition curves
$\beta_2=r_{\beta_3}(\beta_1)$ (subject to $\beta_2 \geq 0$) trace
out a continuous surface $S$ with three bounding curves $C_1$,
$C_2$, and $C_3$.
\end{proposition}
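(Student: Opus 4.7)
The plan is to assemble the family of transition curves produced by Proposition \ref{max} into a surface and read off the three bounding curves from the natural ways the family meets the boundary of the parameter region. For each $\beta_3\geq 0$, Proposition \ref{max} supplies a corner point $(\beta_1^c(\beta_3),\beta_2^c(\beta_3))$ determined through the auxiliary parameter $u_0=u_0(\beta_3)$ of (\ref{beta3}), together with a strictly decreasing transition function $r_{\beta_3}$ defined on $(-\infty,\beta_1^c(\beta_3))$ satisfying $r_{\beta_3}(\beta_1)+\beta_1+\beta_3\to 0$ as $\beta_1\to-\infty$ by Corollary \ref{coro}. I would then define $S$ as the set of all triples $(\beta_1,r_{\beta_3}(\beta_1),\beta_3)$ with $\beta_3\geq 0$, $\beta_1\leq \beta_1^c(\beta_3)$, and $r_{\beta_3}(\beta_1)\geq 0$.

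To show $S$ is continuous, I would establish joint continuity of $(\beta_1,\beta_3)\mapsto r_{\beta_3}(\beta_1)$ on the interior of its domain and continuous extension to the locus of corner points. The strict monotonicity of $f$ proved inside Proposition \ref{max}, combined with the implicit function theorem applied to (\ref{beta3}), makes $u_0$, and hence $(\beta_1^c,\beta_2^c)$, smooth in $\beta_3$. For $\beta_1<\beta_1^c(\beta_3)$, the local maximizers $u_1^*,u_2^*$ of $l_{\beta_3}(u;\beta_1,\beta_2)$ are non-degenerate, so they depend smoothly on $(\beta_1,\beta_2,\beta_3)$, and $r_{\beta_3}(\beta_1)$ is obtained by a second application of the implicit function theorem to the equal-height equation $l_{\beta_3}(u_1^*)=l_{\beta_3}(u_2^*)$. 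As $\beta_1\uparrow\beta_1^c(\beta_3)$, the two maximizers coalesce at $u_0$ and $r_{\beta_3}(\beta_1)\to\beta_2^c(\beta_3)$, attaching $S$ continuously to the corner locus.

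The three bounding curves are identified as follows. Taking the slice $\beta_3=0$ yields the two-parameter transition curve of \cite{RY} in the $(\beta_1,\beta_2)$-plane, which is $C_1$. For $C_2$, fixing $\beta_3\geq 0$ and using that $r_{\beta_3}$ is continuous, strictly decreasing, blows up as $\beta_1\to-\infty$, and equals $\beta_2^c(\beta_3)$ at $\beta_1^c(\beta_3)$, the curve crosses $\beta_2=0$ at exactly one point whenever $\beta_2^c(\beta_3)\leq 0$; inspecting the numerator $qu_0(\beta_3)-(q-1)$ in (\ref{point}) shows this occurs precisely for $u_0(\beta_3)\geq (q-1)/q$, and as $\beta_3$ ranges over the corresponding half-line these crossings sweep out $C_2$. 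For $C_3$, substituting the parametric map $u_0\mapsto(\beta_1^c,\beta_2^c,\beta_3)$ from (\ref{point})--(\ref{beta3}) reproduces verbatim the formulas in Theorem \ref{One}, and the joint non-negativity constraints $\beta_3\geq 0$ (i.e., $pu_0\geq p-1$) and $\beta_2^c\geq 0$ (i.e., $qu_0\leq q-1$ since $p-q<0$) carve out the range $(p-1)/p\leq u_0\leq (q-1)/q$.

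The main obstacle is not any single calculation but ensuring that $S$ is a genuine topological surface at the seams where the three bounding curves abut one another. At the parametric endpoints $u_0=(p-1)/p$ and $u_0=(q-1)/q$ of $C_3$, I would verify directly from the parametric formulas that the corresponding points lie on $C_1$ (where $\beta_3=0$) and on $C_2$ (where $\beta_2^c=0$), respectively; the joint continuity established in the second step then precludes any pathological behavior along the interior of the bounding curves.
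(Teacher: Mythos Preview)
Your proposal is correct and follows the same overall outline as the paper: assemble the curves from Proposition~\ref{max} into $S$, argue continuity, and read off $C_1$, $C_2$, $C_3$ from the constraints $\beta_3=0$, $\beta_2=0$, and the corner locus. Two points of contrast are worth noting. First, the paper's continuity argument is a one-liner: $S$ is exactly the locus where $l_{\beta_3}(u;\beta_1,\beta_2)$ has two global maximizers, and since $l$ is jointly continuous in $(u,\beta_1,\beta_2,\beta_3)$, this locus is automatically closed (hence ``continuous'' in the paper's sense); your route through the implicit function theorem applied to the equal-height condition is more laborious but yields more, namely smoothness of $r_{\beta_3}(\beta_1)$ in both variables on the interior. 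Second, for $C_2$ the paper simply invokes the $p\leftrightarrow q$ symmetry of (\ref{l}) at $\beta_2=0$ to identify $C_2$ as the two-parameter transition curve in the $(\beta_1,\beta_3)$ plane, whereas you locate it as the zero-crossing of $r_{\beta_3}$; both descriptions pick out the same set, but the paper's is cleaner and avoids the case split on the sign of $\beta_2^c(\beta_3)$.
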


\begin{proof}
The continuity of the transition surface $S$ follows easily once
we realize that it consists exactly of parameter values of
$(\beta_1, \beta_2, \beta_3)$ for which $l_{\beta_3}(u)$
(continuous in $\beta_1$, $\beta_2$, and $\beta_3$) has two global
maximizers. By Corollary \ref{coro}, $S$ displays a universal
asymptotic behavior: As $\beta_1 \to -\infty$, $\beta_2 \to
\infty$, and $\beta_3 \to \infty$, the distance between $S$ and
the plane $\beta_1+\beta_2+\beta_3=0$ shrinks to zero. Due to the
non-negativity constraints on $\beta_2$ and $\beta_3$, $S$ is
bounded by three curves $C_1$, $C_2$, and $C_3$: The curve $C_1$
is the intersection of $S$ with the ($\beta_1$, $\beta_2$) plane,
and is given by $\beta_2=r_0(\beta_1)$ (cf. Proposition
\ref{max}); The curve $C_2$ is the intersection of $S$ with the
($\beta_1$, $\beta_3$) plane, and is given analogously (with $p$
and $q$ switched in (\ref{l})); The curve $C_3$ is a critical
curve, and is traced out by the critical points $(\beta_1^c,
\beta_2^c)$ (\ref{point}) (subject to $\beta_2^c \geq 0$).
\end{proof}

\section{Critical Behavior}
\label{CB} By Propositions \ref{max} and \ref{char}, the
maximization problem (\ref{l}) is solved at a unique value $u^*$
off $S$, and at two values $u_1^*$ and $u_2^*$ on $S$ (the jump
from $u_1^*$ to $u_2^*$ is quite noticeable even for small
parameter values of $\beta$). Thus by Theorems \ref{CD} and
\ref{gen}, in the large $n$ limit, a typical $G_n$ drawn from
(\ref{pmf}) is indistinguishable from the Erd\H{o}s-R\'{e}nyi
graph $G(n, u^*)$ off the transition surface $S$, however, on the
transition surface $S$, the structure of $G_n$ is not completely
deterministic: It may behave like an Erd\H{o}s-R\'{e}nyi graph
$G(n, u_1^*)$, or it may behave like an Erd\H{o}s-R\'{e}nyi graph
$G(n, u_2^*)$. Since the limiting free energy density
$\psi_\infty^\beta$ encodes important information about the local
features of the random graph $G_n$ (see for example (\ref{E}) and
(\ref{Cov})), a thorough study of its analyticity properties is
fundamental to understanding the global structure of the
exponential model. The following theorems \ref{A} and \ref{B} are
dedicated to this goal. Together they complete the proof of our
main theorem (Theorem \ref{One}).

\begin{theorem}
\label{A} Consider a $3$-parameter exponential random graph model
(\ref{assumption}). The limiting free energy density
$\displaystyle \psi_\infty^\beta$ is not an analytic function on
the transition surface $S$.
\end{theorem}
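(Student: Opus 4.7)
The plan is to exploit the Chatterjee--Diaconis variational representation $\psi_\infty^\beta = \sup_{0\le u\le 1} l_{\beta_3}(u;\beta_1,\beta_2)$ of Theorem~\ref{CD} together with the structural information about maximizers in Proposition~\ref{max}, via an envelope-theorem argument. The analysis splits into two cases depending on whether the point of $S$ lies on the critical curve $C_3$ or not.

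\textbf{Off $C_3$.} Fix $\beta^\ast\in S\setminus C_3$. By Proposition~\ref{max}, $l_{\beta_3^\ast}(\,\cdot\,;\beta_1^\ast,\beta_2^\ast)$ has exactly two global maximizers $u_1^\ast<u_2^\ast$, each a strict local max with $l''<0$. The implicit function theorem produces analytic branches $u_i^\ast(\beta)$ on a neighborhood of $\beta^\ast$, and $\psi_i(\beta):=l_{\beta_3}(u_i^\ast(\beta);\beta_1,\beta_2)$ is analytic there. Locally $\psi_\infty^\beta = \max(\psi_1,\psi_2)$, and crossing $S$ switches which branch realizes the supremum. The envelope theorem yields $\partial\psi_i/\partial\beta_1 = u_i^\ast$, so the first partial in $\beta_1$ jumps by $u_2^\ast-u_1^\ast\ne 0$ across $S$; hence $\psi_\infty^\beta$ is not even $C^1$ at $\beta^\ast$, let alone analytic.

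\textbf{On $C_3$.} The two maximizers have merged into a single $u_0$ and the preceding argument only gives continuity of first derivatives, so the obstruction to analyticity must be read off from higher-order behavior. From $l'_{\beta_3}(u_0)=0$ together with the defining condition $F(u_0)=F'(u_0)=0$ of $C_3$, the factorization $l''_{\beta_3}(u)=u^{p-2}F(u)$ forces $l''_{\beta_3}(u_0)=l'''_{\beta_3}(u_0)=0$, while $l^{(4)}_{\beta_3}(u_0)=u_0^{p-2}F''(u_0)\ne 0$ follows from the strict monotonicity of $f$ already established inside Proposition~\ref{max}. Taylor expanding $l_{\beta_3}(u;\beta)$ to fourth order in $(u-u_0)$ and first order in $(\beta-\beta_c)$ and then maximizing over $u$ yields the Landau-type scaling $|u^\ast(\beta)-u_0|\sim|\beta-\beta_c|^{1/3}$ and a local expansion of $\psi_\infty^\beta$ containing a fractional-power term of order $(\beta-\beta_c)^{4/3}$; such a singularity is incompatible with analyticity at $\beta_c$.

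\textbf{Main obstacle.} The delicate step is the $C_3$ analysis. One must (i) confirm the quartic non-degeneracy $F''(u_0)\ne 0$ throughout the admissible range $\frac{p-1}{p}\le u_0\le\frac{q-1}{q}$, and (ii) verify that a first-order perturbation in $\beta$ couples nontrivially to the soft mode $u-u_0$ (ensured here because $\partial l/\partial\beta_1=u$ has nonzero derivative at $u_0$), so that the $(\beta-\beta_c)^{4/3}$ term genuinely survives and is not annihilated by higher-order contributions. Off $C_3$, the envelope argument is essentially routine given Proposition~\ref{max}.
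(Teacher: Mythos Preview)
Your treatment of $S\setminus C_3$ is exactly the paper's argument: the envelope identity $\partial\psi_\infty/\partial\beta_1=u^*$ together with the two-branch structure from Proposition~\ref{max} gives a jump in the first derivatives across $S$. Nothing to add there.

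On $C_3$ you diverge from the paper. The paper does not attempt a Landau expansion or identify a fractional exponent; instead it observes, via implicit differentiation of $l'_{\beta_3}(u^*)=0$, that
\[
\frac{\partial^2\psi_\infty^\beta}{\partial\beta_1^2}=\frac{\partial u^*}{\partial\beta_1}=-\frac{1}{l''_{\beta_3}(u^*)},
\]
and similarly for the other second partials. Since Proposition~\ref{max} already gives $l''_{\beta_3}(u^*)<0$ off $C_3$ and $l''_{\beta_3}(u^*)\to 0$ as $\beta\to C_3$, all second derivatives blow up, which is more than enough to rule out analyticity. This needs only $l''(u_0)=0$, not any information about $l'''$ or $l^{(4)}$.

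Your route is correct and in fact extracts finer information (the exponent $4/3$), but it costs you the quartic non-degeneracy check $F''(u_0)\ne0$, equivalently $f'(u_0)\ne0$. For $p<q<5p-1$ this follows immediately from the strict negativity of the discriminant $\Delta$ computed in Proposition~\ref{max}, but at the endpoint $q=5p-1$ the discriminant vanishes and there is one value of $u_0$ (hence one point on $C_3$) where $f'(u_0)=0$; you would then have to push the expansion to still higher order. The paper's argument handles all admissible $(p,q)$ uniformly without this case distinction. So: your proof works, but the paper's second-derivative argument is both shorter and more robust under the stated hypotheses, at the price of not pinning down the singular exponent.
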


\begin{proof}
Due to the jump between the two solutions $u_1^*$ and $u_2^*$ of
the maximization problem (\ref{l}), all the first derivatives
$\displaystyle \frac{\partial}{\partial
\beta_1}\psi_\infty^\beta$, $\displaystyle
\frac{\partial}{\partial \beta_2}\psi_\infty^\beta$, and
$\displaystyle \frac{\partial}{\partial \beta_3}\psi_\infty^\beta$
have (jump) discontinuities across the transition surface $S$,
except along the critical curve $C_3$:
\begin{equation}
\lim_{n\to \infty}\E^\beta t(H_1, G_n)=\lim_{n\to \infty} \E^\beta
t(H_1, G(n, u^*))=u^*= \frac{\partial}{\partial
\beta_1}\psi_{\infty}^\beta,
\end{equation}
\begin{equation}
\lim_{n\to \infty}\E^\beta t(H_2, G_n)=\lim_{n\to \infty} \E^\beta
t(H_2, G(n, u^*))=(u^*)^p= \frac{\partial}{\partial
\beta_2}\psi_{\infty}^\beta,
\end{equation}
\begin{equation}
\lim_{n\to \infty}\E^\beta t(H_3, G_n)=\lim_{n\to \infty} \E^\beta
t(H_3, G(n, u^*))=(u^*)^q= \frac{\partial}{\partial
\beta_3}\psi_{\infty}^\beta.
\end{equation}

To see that the transition across $C_3$ is second-order, we check
the first and second derivatives of $\psi_\infty^\beta$ in the
neighborhood of this curve. By Proposition \ref{max}, for every
$(\beta_1^c, \beta_2^c, \beta_3)$ on $C_3$, $l'_{\beta_3}(u;
\beta_1^c, \beta_2^c)$ is monotonically decreasing on $[0,1]$, and
the unique zero is achieved at $u_0$ (\ref{beta3}). Take any
$0<\epsilon<\min\{u_0, 1-u_0\}$. Set $\displaystyle
\delta=\min\{l'_{\beta_3}(u_0-\epsilon),
-l'_{\beta_3}(u_0+\epsilon)\}$. Consider $(\bar{\beta}_1,
\bar{\beta}_2, \bar{\beta}_3)$ so close to $(\beta_1^c, \beta_2^c,
\beta_3)$ such that $\displaystyle
|\bar{\beta}_1-\beta_1^c|+p|\bar{\beta}_2-\beta_2^c|+q|\bar{\beta}_3-\beta_3|<\delta$.
For every $u$ in $[0,1]$, we then have $\displaystyle
\left|l'_{\bar{\beta}_3}(u; \bar{\beta}_1,
\bar{\beta}_2)-l'_{\beta_3}(u; \beta_1^c,
\beta_2^c)\right|<\delta$. It follows that the zero
$u^*(\bar{\beta}_1, \bar{\beta}_2, \bar{\beta}_3)$ (or $u_1^*$ and
$u_2^*$) must satisfy $\left|u^*-u_0\right|<\epsilon$, which
easily implies the continuity of $\displaystyle
\frac{\partial}{\partial \beta_1}\psi_\infty^\beta$,
$\displaystyle \frac{\partial}{\partial
\beta_2}\psi_\infty^\beta$, and $\displaystyle
\frac{\partial}{\partial \beta_3}\psi_\infty^\beta$ at
$(\beta_1^c, \beta_2^c, \beta_3)$. Concerning the divergence of
the second derivatives, we compute
\begin{equation}
\frac{\partial^2}{\partial
\beta_1^2}\psi_{\infty}^\beta=\frac{\partial}{\partial \beta_1}u^*
=-\frac{1}{l''_{\beta_3}(u^*)}, \hspace{1cm}
\frac{\partial^2}{\partial
\beta_2^2}\psi_{\infty}^\beta=\frac{\partial}{\partial
\beta_2}(u^*)^p =-\frac{p^2(u^*)^{2p-2}}{l''_{\beta_3}(u^*)},
\end{equation}
\begin{equation}
\frac{\partial^2}{\partial
\beta_3^2}\psi_{\infty}^\beta=\frac{\partial}{\partial
\beta_3}(u^*)^q =-\frac{q^2(u^*)^{2q-2}}{l''_{\beta_3}(u^*)},
\hspace{1cm} \frac{\partial^2}{\partial \beta_1
\partial \beta_2}\psi_{\infty}^\beta=\frac{\partial}{\partial
\beta_1}(u^*)^p =-\frac{p(u^*)^{p-1}}{l''_{\beta_3}(u^*)},
\end{equation}
\begin{equation}
\frac{\partial^2}{\partial \beta_1
\partial \beta_3}\psi_{\infty}^\beta=\frac{\partial}{\partial
\beta_1}(u^*)^q =-\frac{q(u^*)^{q-1}}{l''_{\beta_3}(u^*)},
\hspace{1cm} \frac{\partial^2}{\partial \beta_2
\partial \beta_3}\psi_{\infty}^\beta=\frac{\partial}{\partial
\beta_2}(u^*)^q=-\frac{pq(u^*)^{p+q-2}}{l''_{\beta_3}(u^*)}.
\end{equation}
But as was explained in Proposition \ref{max}, as $(\bar{\beta}_1,
\bar{\beta}_2, \bar{\beta}_3)$ approaches $C_3$,
$l''_{\bar{\beta}_3}(u^*; \bar{\beta}_1, \bar{\beta}_2)$ converges
to zero. The desired singularity is thus justified.
\end{proof}

Real and complex analyticity are both defined in terms of
convergent power series. To examine the analyticity of the
limiting free energy density $\psi_\infty^\beta$ off the
transition surface $S$, we resort to an analytic implicit function
theorem, which may be interpreted in either the real or the
complex setting.

\begin{theorem}[Krantz-Parks \cite{KP}]
\label{KP} Suppose that the power series
\begin{equation}
F(x, y)=\sum_{\alpha, k}a_{\alpha, k}{x}^{\alpha}y^k
\end{equation}
is absolutely convergent for $|x|\leq R_1$ and $|y|\leq R_2$. If
$a_{0,0}=0$ and $a_{0,1}\neq 0$, then there exist $r_0>0$ and a
power series
\begin{equation}
\label{f} f(x)=\sum_{|\alpha|>0}c_{\alpha}{x}^{\alpha}
\end{equation}
such that (\ref{f}) is absolutely convergent for $|x|\leq r_0$ and
$F(x, f(x))=0$.
\end{theorem}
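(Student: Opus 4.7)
The strategy is the classical two-step approach to the analytic implicit function theorem: first construct $f$ as a unique formal power series solving $F(x, f(x)) = 0$, then establish its absolute convergence by the Cauchy method of majorants.

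First I would split off the distinguished linear term in $y$ by writing $F(x,y) = a_{0,1}\, y + G(x,y)$, where every monomial of $G$ either carries at least one $x$-variable or is $y^k$ with $k \geq 2$ (the constant term vanishes because $a_{0,0}=0$). Since $a_{0,1} \neq 0$, the equation $F(x, f(x)) = 0$ rearranges to the fixed-point form $f(x) = -G(x, f(x))/a_{0,1}$. Substituting the ansatz $f(x) = \sum_{|\alpha| \geq 1} c_\alpha x^\alpha$ and comparing the coefficient of $x^\alpha$ on both sides yields a recursion in which $c_\alpha$ is a polynomial in $\{a_{\beta,k}\}$ and in the previously computed $\{c_{\alpha'} : |\alpha'| < |\alpha|\}$; induction on $|\alpha|$ uniquely determines every $c_\alpha$, producing the formal solution.

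For convergence I would invoke the method of majorants. Absolute convergence of $F$ on the closed polydisc $\{|x|\leq R_1,\, |y|\leq R_2\}$ gives Cauchy bounds $|a_{\alpha,k}| \leq M R_1^{-|\alpha|} R_2^{-k}$ for some constant $M$. I would then construct an auxiliary analytic function $\tilde F$ whose non-negative Taylor coefficients $A_{\alpha,k}$ satisfy $A_{\alpha,k} \geq |a_{\alpha,k}|$ while preserving the distinguished linear coefficient in $y$. A convenient choice expresses $\tilde F$ as a combination of geometric series in $|x_i|/R_1$ and $y/R_2$, so that $\tilde F(x, \tilde f(x)) = 0$ reduces, after clearing denominators, to a low-degree polynomial equation solvable in closed form for the branch with $\tilde f(0) = 0$. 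The resulting $\tilde f$ is analytic on some polydisc $\{|x| \leq r_0\}$ with non-negative Taylor coefficients $\tilde c_\alpha$. Running the same formal recursion that determined $c_\alpha$ against the one determining $\tilde c_\alpha$ yields $|c_\alpha| \leq \tilde c_\alpha$ by induction on $|\alpha|$, proving absolute convergence of $\sum c_\alpha x^\alpha$ on $\{|x| \leq r_0\}$.

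The main obstacle is the construction of a majorant $\tilde F$ meeting three competing constraints: its Taylor coefficients must dominate those of $F$ coefficient-wise, it must retain the invertibility condition encoded by $a_{0,1} \neq 0$, and the resulting implicit equation must be explicitly solvable with the selected branch having a provably positive radius of convergence. Once such a majorant is fixed, the remainder of the argument is a routine induction on $|\alpha|$ and elementary manipulations of convergent power series; no additional tools beyond Cauchy's estimates and the analyticity of algebraic functions at regular points are needed.
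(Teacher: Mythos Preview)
Your sketch is a correct outline of the classical majorant proof of the analytic implicit function theorem, and it is essentially the argument given in the Krantz--Parks reference. Note, however, that the paper itself does not prove this statement: Theorem~\ref{KP} is quoted as an external result from \cite{KP} and then applied in the proof of Theorem~\ref{B}, so there is no in-paper proof to compare against beyond the citation.
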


\begin{theorem}
\label{B} Consider a $3$-parameter exponential random graph model
(\ref{assumption}). Suppose $\beta_2$ and $\beta_3$ are
non-negative. Then the limiting free energy density $\displaystyle
\psi_\infty^\beta$ is an analytic function off the transition
surface $S$.
\end{theorem}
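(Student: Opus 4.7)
The plan is to use the Krantz--Parks analytic implicit function theorem (Theorem \ref{KP}) to promote the first-order condition $l'_{\beta_3}(u^*;\beta_1,\beta_2)=0$ satisfied by the global maximizer into a locally analytic map $\beta\mapsto u^*(\beta)$, and then conclude analyticity of $\psi_\infty^\beta$ from Theorem \ref{CD}, which identifies $\psi_\infty^\beta$ with the composition $l_{\beta_3}(u^*(\beta);\beta_1,\beta_2)$. Since the only non-polynomial terms in $l_{\beta_3}(u;\beta_1,\beta_2)$ are $-\tfrac12 u\log u-\tfrac12(1-u)\log(1-u)$, which are jointly real-analytic on $(0,1)$, the function $l_{\beta_3}$ is jointly real-analytic in $(u,\beta_1,\beta_2,\beta_3)$ on $(0,1)\times\mathbb{R}^3$, and Theorem \ref{KP} is applicable at any interior critical point at which $l''_{\beta_3}$ does not vanish.

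Fix any $\beta^0=(\beta_1^0,\beta_2^0,\beta_3^0)$ off $S$. By Proposition \ref{max}, there is a unique global maximizer $u_0^*=u^*(\beta^0)\in(0,1)$ satisfying $l'_{\beta_3^0}(u_0^*;\beta_1^0,\beta_2^0)=0$. The crucial step is the strict inequality
\begin{equation*}
l''_{\beta_3^0}(u_0^*;\beta_1^0,\beta_2^0)<0,
\end{equation*}
which I would verify by the case split supplied by Proposition \ref{max}. When $\beta_2^0\leq\beta_2^c$, one has $l''_{\beta_3^0}\leq 0$ on $[0,1]$ with equality possible only at $u_0$ in the edge case $\beta_2^0=\beta_2^c$; equality at $u_0^*=u_0$ would further force $\beta_1^0=\beta_1^c$, pinning $\beta^0$ to $C_3\subset S$ contrary to our hypothesis. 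When $\beta_2^0>\beta_2^c$, the second derivative $l''_{\beta_3^0}$ has two zeros $u_1<u_2$ where $l'_{\beta_3^0}$ attains a local minimum and a local maximum respectively, and the classification in Proposition \ref{max} places the global maximizer in $[0,u_1)\cup(u_2,1]$, where $l''_{\beta_3^0}<0$; even on the V-boundary curves $\beta_2^0=m(a(\beta_1^0))$ or $\beta_2^0=m(b(\beta_1^0))$ (which are themselves off $S$) the global maximizer avoids the degenerate points $u_1,u_2$ by the Proposition's description.

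With non-degeneracy secured, set
\begin{equation*}
F(x_1,x_2,x_3,y)=l'_{\beta_3^0+x_3}\bigl(u_0^*+y;\beta_1^0+x_1,\beta_2^0+x_2\bigr),
\end{equation*}
an absolutely convergent power series on some polydisc with $F(0,0,0,0)=0$ and $\partial_y F(0,0,0,0)=l''_{\beta_3^0}(u_0^*)\neq 0$. Theorem \ref{KP} yields an analytic $f$ such that $u^*(\beta):=u_0^*+f(\beta-\beta^0)$ solves $l'_{\beta_3}(u^*(\beta);\beta_1,\beta_2)=0$ near $\beta^0$. A short continuity argument then confirms that $u^*(\beta)$ is the global maximizer near $\beta^0$: strict negativity of $l''_{\beta_3}(u^*(\beta))$ persists under small perturbation, so $u^*(\beta)$ remains a local maximizer, and uniqueness of the global maximizer at $\beta^0$ together with joint continuity of $l_{\beta_3}$ in $(u,\beta)$ prevents any competing critical value from overtaking $l_{\beta_3^0}(u_0^*)$ on a sufficiently small neighborhood. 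Composing analytic maps, $\psi_\infty^\beta=l_{\beta_3}(u^*(\beta);\beta_1,\beta_2)$ is analytic near $\beta^0$, and since $\beta^0\notin S$ was arbitrary the proof is complete.

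The principal obstacle is the strict non-degeneracy $l''_{\beta_3^0}(u_0^*)<0$ off $S$: Proposition \ref{max} records only the qualitative sign pattern of $l''_{\beta_3}$, and one has to chase carefully where the global maximizer sits relative to the zeros $u_1,u_2$ of $l''_{\beta_3^0}$, in particular on the V-boundary curves, in order to rule out a second-order degeneracy away from the critical curve $C_3\subset S$. Once that is done, the application of Theorem \ref{KP} and the final compositional analyticity step are essentially routine.
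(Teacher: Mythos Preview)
Your proposal is correct and follows essentially the same route as the paper: apply the Krantz--Parks analytic implicit function theorem to $l'_{\beta_3}$ at the unique global maximizer $u^*$, deduce that $u^*(\beta)$ is analytic, and then obtain analyticity of $\psi_\infty^\beta=l_{\beta_3}(u^*(\beta);\beta_1,\beta_2)$ by composition via Theorem~\ref{CD}. The paper's proof is terser---it simply asserts $a_{0,1}=l''_{\beta_3}(u^*)\neq 0$ without justification and omits the argument that the implicitly defined critical point remains the global maximizer nearby---whereas you supply both of these details through the case analysis drawn from Proposition~\ref{max}.
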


\begin{proof}
It is clear that $l_{\beta_3}(u; \beta_1, \beta_2)$ is analytic
for $u\in (0, 1)$, $\beta_1\in (-\infty, \infty)$, $\beta_2\in
(-\infty, \infty)$, and $\beta_3\in (-\infty, \infty)$. We show
that the maximizer $u^*$ for $l_{\beta_3}(u; \beta_1, \beta_2)$ is
an analytic function of $\beta$ off the transition surface $S$.
Fix $(\beta_1, \beta_2, \beta_3)$ not on $S$. For $(\bar{\beta}_1,
\bar{\beta}_2, \bar{\beta}_3)$ close to $(\beta_1, \beta_2,
\beta_3)$, we transform the function $l'_{\bar{\beta}_3}(u;
\bar{\beta}_1, \bar{\beta}_2)$ into a function $F(x, y)$ by
setting $x=(\bar{\beta}_1-\beta_1, \bar{\beta}_2-\beta_2,
\bar{\beta}_3-\beta_3)$ and $y=u-u^*(\beta_1, \beta_2, \beta_3)$.
It is easy to check that $F(x, y)$ satisfies all the conditions of
Theorem \ref{KP}: It has the desired domain of analyticity, is
locally absolutely convergent, and its first two coefficients are
given by
\begin{equation}
a_{0,0}=F(0, 0)=l'_{\beta_3}(u^*(\beta_1, \beta_2, \beta_3);
\beta_1, \beta_2)=0,
\end{equation}
\begin{equation}
a_{0,1}=\frac{\partial F}{\partial y}(0,
0)=l''_{\beta_3}(u^*(\beta_1, \beta_2, \beta_3); \beta_1,
\beta_2)\neq 0.
\end{equation}
The absolute convergence for $f(x)=u^*(\bar{\beta}_1,
\bar{\beta}_2, \bar{\beta}_3)-u^*(\beta_1, \beta_2, \beta_3)$ then
follows easily, which implies the analyticity of $u^*$ as a
function of $\beta$. As the composition of analytic functions is
analytic as long as the domain and range match up, by Theorem
\ref{CD}, this further implies the analyticity of
$\psi_\infty^\beta=l_{\beta_3}(u^*; \beta_1, \beta_2)$ as a
function of $\beta$ off the transition surface $S$, where the
maximizer $u^*$ is uniquely defined.
\end{proof}

\section*{acknowledgements} The author gratefully acknowledges the
support of the National Science Foundation through two
international travel grants, which enabled her to attend the 8th
World Congress on Probability and Statistics and the 17th
International Congress on Mathematical Physics, where she had the
opportunity to discuss this work. She is also thankful to the
anonymous referees for their useful comments and suggestions.


\begin{thebibliography}{19}
\bibitem{Besag} Besag, J.: Statistical analysis of non-lattice
data. J. Roy. Statist. Soc. Ser. D 24, 179-195 (1975)

\bibitem{HC} Hammersley, J., Clifford, P.: Markov fields on finite graphs and
lattices.\\
http://www.statslab.cam.ac.uk/\mytilde
grg/books/hammfest/hamm-cliff.pdf (1971)

\bibitem{HL} Holland, P., Leinhardt, S.: An exponential family of probability
distributions for directed graphs. J. Amer. Statist. Assoc. 76,
33-50 (1981)

\bibitem{FS} Frank, O., Strauss, D.: Markov graphs. J. Amer. Statist. Assoc.
81, 832-842 (1986)

\bibitem{S} Snijders, T., Pattison, P., Robins, G., Handcock M.: New specifications for
exponential random graph models. Sociol. Method. 36, 99-153 (2006)

\bibitem{R} Rinaldo, A., Fienberg, S., Zhou, Y.: On the geometry of discrete
exponential families with application to exponential random graph
models. Electron. J. Stat. 3, 446-484 (2009)

\bibitem{WF} Wasserman, S., Faust, K.: Social Network Analysis: Methods and
Applications. Cambridge University Press, Cambridge (2010)

\bibitem{CD} Chatterjee, S., Diaconis, P.: Estimating and
understanding exponential random graph models. arXiv: 1102.2650v3
(2011)

\bibitem{YL} Yang, C.N., Lee, T.D.: Statistical theory
of equations of state and phase transitions. Phys. Rev. 87,
404-419 (1952)

\bibitem{ER} Erd\H{o}s, P., R\'{e}nyi, A.: On the
evolution of random graphs. Publ. Math. Inst. Hung. Acad. Sci. 5,
17-61 (1960)

\bibitem{PN1} Park, J., Newman, M.: Solution of the
two-star model of a network. Phys. Rev. E 70, 066146 (2004)

\bibitem{PN2} Park, J., Newman, M.: Solution for the
properties of a clustered network. Phys. Rev. E 72, 026136 (2005)

\bibitem{LS} Lov\'{a}sz, L., Szegedy B.: Limits of
dense graph sequences. J. Combin. Theory Ser. B 98, 933-957 (2006)

\bibitem{HJ} H\"{a}ggstr\"{o}m, O., Jonasson, J.: Phase transition in the random triangle
model. J. Appl. Probab. 36, 1101-1115 (1999)

\bibitem{B} Bhamidi, S., Bresler, G., Sly, A.: Mixing time of exponential random
graphs. Ann. Appl. Probab. 21, 2146-2170 (2011)

\bibitem{RY} Radin, C., Yin, M.: Phase transitions in
exponential random graphs. arXiv: 1108.0649v2 (2011)

\bibitem{AR} Aristoff, D., Radin, C.: Emergent structures in large
networks. arXiv: 1110.1912v1 (2011)

\bibitem{Yin} Yin, M.: Understanding exponential random graph
models.\\
http://www.ma.utexas.edu/users/myin/Talk.pdf (2012)

\bibitem{KP} Krantz, S., Parks, H.: The Implicit Function Theorem:
History, Theory, and Applications. Birkh\"{a}user, Boston (2002)
\end{thebibliography}
\end{document}